\newcommand{\mbZ}{\mathbb Z}
\newcommand{\mbC}{\mathbb C}
\newcommand{\oM}{\overline{\mathcal M}}
\newcommand{\tu}{{\widetilde u}}
\newcommand{\og}{\overline g}
\newcommand{\oh}{\overline h}
\newcommand{\hLambda}{\widehat\Lambda}
\newcommand{\cT}{\mathcal T}
\def\cM{{\mathcal{M}}}
\def\oM{{\overline{\mathcal{M}}}}
\def\CP{{{\mathbb C}{\mathbb P}}}
\renewcommand{\Im}{\mathrm{Im}}
\def\d{{\partial}}
\renewcommand{\>}{\right>}
\newcommand{\eps}{\varepsilon}
\newcommand{\cA}{\mathcal A}
\newcommand{\hcA}{\widehat{\mathcal A}}
\newcommand{\DR}{\mathrm{DR}}
\newcommand{\even}{\mathrm{even}}
\newcommand{\ct}{\mathrm{ct}}
\DeclareMathOperator{\Deg}{Deg}
\newcommand{\gl}{\mathrm{gl}}
\newcommand{\hu}{\widehat{u}}
\newcommand{\hOmega}{\widehat{\Omega}}
\newcommand{\hE}{\widehat{E}}
\newcommand{\diag}{\mathrm{diag}}
\newcommand{\un}{{1\!\! 1}}
\newcommand{\mcF}{\mathcal{F}}
\newcommand{\of}{\overline{f}}
\newcommand{\oQ}{{\overline{Q}}}
\newcommand{\oP}{{\overline{P}}}
\newcommand{\tK}{\widetilde{K}}
\newcommand{\oR}{{\overline{R}}}
\newcommand{\orig}{\mathrm{orig}}
\newcommand{\otu}{\overline{\widetilde{u}}}
\newcommand{\triv}{\mathrm{triv}}
\newcommand{\KdV}{\mathrm{KdV}}
\newcommand{\rspin}{\text{$r$-spin}}
\newcommand{\pt}{\mathrm{pt}}
\newcommand{\GD}{\mathrm{GD}}
\DeclareMathOperator{\res}{res}
\newcommand{\tX}{\widetilde{X}}
\newcommand{\tS}{\widetilde{S}}
\newcommand{\Td}{\mathrm{Td}}
\newtheorem{theorem}{Theorem}[section]
\newtheorem{proposition}[theorem]{Proposition}
\newtheorem{lemma}[theorem]{Lemma}
\newtheorem{conjecture}[theorem]{Conjecture}
\theoremstyle{remark}
\newtheorem{remark}[theorem]{Remark}
\newtheorem{example}[theorem]{Example}
\theoremstyle{definition}
\newtheorem{definition}[theorem]{Definition}
\newtheorem{notation}[theorem]{Notation}
\numberwithin{equation}{section}
\title{Towards a bihamiltonian structure for the double ramification hierarchy}
\author{Alexandr Buryak}
\address{A.~Buryak:\newline 
Faculty of Mathematics, National Research University Higher School of Economics, \newline
6 Usacheva str., 119048 Moscow, Russian Federation; and \newline
Faculty of Mechanics and Mathematics, Lomonosov Moscow State University, \newline 
GSP-1, 119991 Moscow, Russian Federation}
\email{aburyak@hse.ru}
\author{Paolo Rossi}
\address{P.~Rossi:\newline
Dipartimento di Matematica ``Tullio Levi-Civita'', Universit\`a degli Studi di Padova,\newline
Via Trieste 63, 35121 Padova, Italy}
\email{paolo.rossi@math.unipd.it}
\author{Sergey Shadrin}
\address{S.~Shadrin:\newline 
Korteweg--de Vries Instituut voor Wiskunde, Universiteit van Amsterdam,\newline
Postbus 94248, 1090GE Amsterdam, The Netherlands}
\email{s.shadrin@uva.nl}
\dedicatory{To the memory of Boris Dubrovin, our teacher and friend}
\begin{document}

\begin{abstract}
We propose a remarkably simple and explicit conjectural formula for a bihamiltonian structure of the double ramification hierarchy corresponding to an arbitrary homogeneous cohomological field theory. Various checks are presented to support the conjecture.
\end{abstract}

\date{\today}

\maketitle

\tableofcontents

\section*{Introduction}

Cohomological field theories (or CohFTs for brevity) are systems of cohomology classes on the moduli space $\oM_{g,n}$ of stable algebraic curves of genus $g$ with $n$ marked points. They were introduced by Kontsevich and Manin in~\cite{KM94} to axiomatize the properties of Gromov--Witten classes of a given target variety. Their compatibility with the strata structure and natural morphisms between moduli spaces makes them powerful tools for probing the cohomology of~$\oM_{g,n}$ and its tautological ring in particular.\\

Since the Kontsevich--Witten theorem \cite{Wit91,Kon92} stating that the generating series of integrals over $\oM_{g,n}$ of monomials in psi classes (the Chern classes of tautological line bundles) is the logarithm of the tau function of a special solution to the Korteweg--de Vries (KdV) hierarchy, it is well known that integrable hierarchies of evolutionary, Hamiltonian, tau-symmetric PDEs control the intersection theory of CohFTs.\\

Dubrovin and Zhang \cite{DZ01} give a systematic construction of such integrable hierarchy starting from a semisimple cohomological field theory. Their framework gives, among other things, the language for stating the analogue of the Kontsevich--Witten theorem for any semisimple CohFT, where the KdV hierarchy is replaced by the relevant Dubrovin--Zhang (DZ) hierarchy.\\

In fact, Dubrovin and Zhang's method postulates the existence of a bihamiltonian structure (a pair of compatible Poisson structure producing the flows by a recursive procedure) for the DZ hierarchy of a homogeneous semisimple CohFTs. While one of the two Poisson structures was proved to exist (in the differential polynomial class) in \cite{BPS12a,BPS12b}, the existence of the second Hamiltonian structure is still an open problem. Notice that the approach of \cite{BPS12a,BPS12b} also clarifies how the construction of the DZ hierarchy (as a Hamiltonian system with respect to the first Poisson bracket) can be based on the axioms and properties of CohFTs together with semisimplicity, without requiring the existence of the second Poisson structure or homogeneity. Nonetheless, the existence of this second Hamiltonian structure in the homogeneous case remains an important unproven feature of the DZ hierarchy.\\

In \cite{Bur15} a novel approach to constructing integrable hierarchies starting from CohFTs was introduced. This approach does not require semisimplicity of the CohFT and, although based again on the intersection theory on $\oM_{g,n}$, it employs different tautological classes, notably the double ramification cycle (an appropriate compactification of the locus of smooth curves whose marked points support a principal divisor), which explains why this hierarchy was called the double ramification (DR) hierarchy.\\

The DR hierarchy is always Hamiltonian with respect to a very simple Poisson structure, which, as opposed to the one for the DZ hierarchy, does not essentially depend on the underlying CohFT. The two hierarchies coincide by definition in the dispersionless (genus $0$) limit and, by a conjecture in~\cite{Bur15} called the {\it DR/DZ equivalence conjecture}, in the semisimple case they are related by a Miura transformation, which was completely identified in~\cite{BDGR18}. Although still unproven, the DR/DZ equivalence conjecture has accumulated a remarkable amount of evidence and verifications (see e.g. \cite{BG16,BDGR18,BDGR19,BGR19,DR19}).\\

In this paper we propose a very simple formula for a second Poisson structure and collect some evidence for it to give a bihamiltonian structure for the DR hierarchy. These Poisson brackets do depend on the homogeneous CohFT under consideration in a remarkably explicit way. We also compute the central invariants of the resulting bihamiltonian structure, finding that it is Miura equivalent to the conjectured DZ bihamiltonian structure. Finally, we confirm that in several well-known examples of CohFTs our formula does give the expected bihamiltonian structure of the corresponding hierarchy.\\

Notice that, assuming that our conjectural second Poisson bracket gives a bihamiltonian structure for the DR hierarchy, the existence of a bihamiltonian structure for the DZ hierarchy follows from the DR/DZ equivalence conjecture. In fact, this means that the Miura transformation mapping the DZ hierarchy to the DR hierarchy would simplify not only the first Poisson structure (making it virtually independent of the CohFT), but also the second one, for which no explicit formula was previously known.\\

\subsection*{Organization of the paper}
In Section \ref{section:main conjecture} we present our explicit formula and the main conjecture that it gives a bihamiltonian structure for the DR hierarchy of a homogeneous CohFT.\\

In Section \ref{section:general checks} we verify our conjecture in genus $0$. Then we prove that the level $0$ (primary) flows of the hierarchy can be obtained by bihamiltonian recursion from the level $-1$ integrals of motion (the Casimir functionals of the first Poisson bracket). Moreover, assuming our conjectural second Poisson bracket satisfies the Jacobi identity, we prove that it is compatible with the first one (i.e., their Schouten--Nijenhuis bracket vanishes).\\

In Section \ref{section:central invariants}, assuming again that our conjectural second Poisson bracket satisfies the Jacobi identity, we compute the central invariants of the Poisson pencil formed by the two Hamiltonian structures. Under mild hypotheses these central invariants classify bihamiltonian structures of the type relevant for the DR and DZ hierarchies up to Miura transformations. For the Dubrovin--Zhang bihamiltonian structure these invariants have been computed (see e.g.~\cite{Liu18}) and in this paper we show that they coincide with the ones for the DR bihamiltonian structure, as expected in light of the DR/DZ equivalence conjecture.\\

Finally, in Section \ref{section: examples} we prove our main conjecture for several important examples of {CohFTs}, including the trivial and $r$-spin CohFTs (for $r\leq 5$) and the Gromov--Witten theory of the projective line.\\

\subsection*{Notation and conventions} 
Throughout the text we use the Einstein summation convention for repeated upper and lower Greek indices.\\

When it doesn't lead to a confusion, we use the symbol $*$ to indicate any value, in the appropriate range, of a sub- or superscript.\\

For a topological space~$X$ let $H^*(X)$ denote the cohomology ring of~$X$ with the coefficients in $\mbC$.\\

\subsection*{Acknowledgements}

We are grateful to A.~Arsie and P.~Lorenzoni for valuable remarks about the preliminary version of the paper. We would like to thank G.~Carlet and F.~Hern\'andez Iglesias for useful discussions on closely related topics. We thank anonymous referees of our paper for valuable comments that allowed to improve the exposition of the paper.\\ 

The work of A.~B. (Sections 1 and 4) was supported by the grant no.~20-11-20214 of the Russian Science Foundation. S.~S. was supported by the Netherlands Organization for Scientific Research.\\


\section{Double ramification hierarchy and the main conjecture}\label{section:main conjecture}

In this section, after recalling the notion of cohomological field theory and the construction of the double ramification hierarchy, we present our conjectural formula for a bihamiltonian structure of the double ramification hierarchy.

\subsection{Cohomological field theories}\label{subsection:CohFT}

Let $\oM_{g,n}$ be the Deligne--Mumford moduli space of stable curves of genus $g$ with $n$ marked points, $g\geq 0$, $n\geq 0$, $2g-2+n>0$. Note that $\oM_{0,3}=\mathrm{pt}$, and throughout the text we silently use the identification $H^*(\oM_{0,3})\cong\mbC$. Recall the following system of standard maps between these spaces:
\begin{itemize}
\item $\pi_{g,n+1}\colon\oM_{g,n+1}\to\oM_{g,n}$ is the map that forgets the last marked point. 
\item $\gl_{g_1,I_1;g_2,I_2}\colon\oM_{g_1,n_1+1}\times\oM_{g_2,n_2+1}\to \oM_{g_1+g_2,n_1+n_2}$, is the gluing map that identifies the last marked points of curves of genus $g_1$ and $g_2$ and turns them into a node. The sets $I_1$ and $I_2$ of cardinality $n_1$ and $n_2$, $I_1\sqcup I_2 = \{1,\dots,n_1+n_2\}$, keep track of the relabelling of the remaining marked points. 
\item $\gl^{\mathrm{irr}}_{g,n+2}\colon\oM_{g,n+2}\to \oM_{g+1,n}$ is the gluing map that identifies the last two marked points and turns them into a node.
\end{itemize}
Abusing notation we denote these maps by $\pi$, $\gl_2$, and $\gl_1$, respectively. 

Let $V$ be a finite dimensional vector space of dimension $N$ with a distinguished vector $e\in V$, called the \emph{unit}, and a symmetric nondegenerate bilinear form $(\cdot,\cdot)$ on $V$, called the \emph{metric}. We fix a basis $e_1,\ldots,e_N$ in $V$ and let $(\eta_{\alpha\beta})$ denote the matrix of the metric in this basis, $\eta_{\alpha\beta}\coloneqq (e_\alpha,e_\beta)$, and $A_\alpha$ the coordinates of $e$ in this basis, $e=A^\alpha e_\alpha$. As usual, $\eta^{\alpha\beta}$ denotes the entries of the inverse matrix, $(\eta^{\alpha\beta})\coloneqq (\eta_{\alpha\beta})^{-1}$.

\begin{definition}[\cite{KM94}]
A \emph{cohomological field theory} (CohFT) is a system of linear maps 
$$
c_{g,n}\colon V^{\otimes n} \to H^\even(\oM_{g,n}),\quad 2g-2+n>0,
$$
such that the following axioms are satisfied:
\begin{enumerate}[(i)]
\item The maps $c_{g,n}$ are equivariant with respect to the $S_n$-action permuting the $n$ copies of~$V$ in $V^{\otimes n}$ and the $n$ marked points in $\oM_{g,n}$, respectively;

\item $\pi^* c_{g,n}( \otimes_{i=1}^n e_{\alpha_i}) = c_{g,n+1}(\otimes_{i=1}^n  e_{\alpha_i}\otimes e)$ and $c_{0,3}(e_{\alpha_1}\otimes e_{\alpha_2} \otimes e)=\eta_{\alpha_1\alpha_2}$;

\item $\gl_2^* c_{g_1+g_2,n_1+n_2}( \otimes_{i=1}^{n_1+n_2} e_{\alpha_i}) = c_{g_1,n_1+1}(\otimes_{i\in I_1} e_{\alpha_i} \otimes e_\mu)\otimes c_{g_2,n_2+1}(\otimes_{i\in I_2} e_{\alpha_i}\otimes e_\nu)\eta^{\mu \nu}$;

\item $\gl_1^* c_{g+1,n}(\otimes_{i=1}^n e_{\alpha_i}) = c_{g,n+2}(\otimes_{i=1}^n e_{\alpha_i}\otimes e_{\mu}\otimes e_\nu) \eta^{\mu \nu}$.
\end{enumerate}
In all axioms above we assume $\alpha_i\in\{1,\dots,N\}$ for any $i=1,2,\dots$. 
\end{definition}

For an arbitrary CohFT the formal power series
$$
F=F(t^1,\ldots,t^N)\coloneqq \sum_{n\geq 3}\frac{1}{n!}\sum_{1\leq\alpha_1,\ldots,\alpha_n\leq N}\left(\int_{\oM_{0,n}}c_{0,n}(\otimes_{i=1}^n e_{\alpha_i})\right)\prod_{i=1}^n t^{\alpha_i}
$$
satisfies the equations
\begin{align*}
A^\mu\frac{\d^3 F}{\d t^\mu\d t^\alpha \d t^\beta} &= \eta_{\alpha\beta}, && 1\leq \alpha,\beta\leq N,\\
\frac{\d^3 F}{\d t^\alpha \d t^\beta \d t^\mu} \eta^{\mu \nu}\frac{\d^3 F}{\d t^\nu \d t^\gamma \d t^\delta} &= \frac{\d^3 F}{\d t^\alpha \d t^\gamma \d t^\mu} \eta^{\mu \nu}\frac{\d^3 F}{\d t^\nu \d t^\beta \d t^\delta}, && 1\leq \alpha,\beta,\gamma,\delta\leq N.
\end{align*}
Thus, the formal power series $F$ defines a Dubrovin--Frobenius manifold structure on a formal neighbourhood of $0$ in $V$. 

\begin{definition}
A CohFT  $\{c_{g,n}\}$ is called \emph{semisimple} (at the origin) if the algebra defined by the structure constants $\left.c^\alpha_{\beta\gamma}\right|_{t^*=0}$, where 
$$
c^\alpha_{\beta\gamma}\coloneqq\eta^{\alpha\mu}\frac{\d^3F}{\d t^\mu\d t^\beta\d t^\gamma},
$$
doesn't have nilpotents.
\end{definition}

\subsubsection{Homogeneity}

Let $V$ be a graded vector space and assume the basis $e_1,\ldots,e_N$ is homogeneous with $\deg e_\alpha = q_\alpha$, $\alpha=1,\dots,N$. Assume also that $\deg e = 0$. 
By $\Deg\colon H^*(\oM_{g,n})\to H^*(\oM_{g,n})$ we denote the operator that acts on~$H^i(\oM_{g,n})$ by multiplication by $\frac{i}{2}$.

\begin{definition} A CohFT $\{c_{g,n}\}$ is called \emph{homogeneous}, or \emph{conformal}, if there exist complex constants $r^\alpha$, $\alpha=1,\dots,N$, and $\delta$ such that
\begin{gather}\label{eq:definition of a homogeneous CohFT}
\Deg c_{g,n}(\otimes_{i=1}^ne_{\alpha_i})+\pi_*c_{g,n+1}(\otimes_{i=1}^ne_{\alpha_i}\otimes r^\gamma e_\gamma)=\left(\sum_{i=1}^n q_{\alpha_i}+\delta(g-1)\right)c_{g,n}(\otimes_{i=1}^ne_{\alpha_i}).
\end{gather}
The constant $\delta$ is called the \emph{conformal dimension} of CohFT.
\end{definition}

For a homogeneous CohFT the formal power series $F(t^1,\ldots,t^N)$ satisfies the property
$$
\left((1-q_\alpha)t^\alpha+r^\alpha\right)\frac{\d F}{\d t^\alpha}=(3-\delta)F+\frac{1}{2}A_{\alpha\beta}t^\alpha t^\beta,
$$
where 
$$
A_{\alpha\beta}\coloneqq r^\mu c_{0,3}(e_\alpha\otimes e_\beta\otimes e_\mu).
$$
Thus, the associated Dubrovin--Frobenius manifold on a formal neighbourhood of~$0$ in $V$ is also homogeneous, with the Euler vector field given by
$$
E=E^\alpha\frac{\d}{\d t^\alpha}\coloneqq \left((1-q_\alpha)t^\alpha+r^\alpha\right)\frac{\d}{\d t^\alpha}.
$$


\subsection{Double ramification hierarchy}
\subsubsection{Formal loop space, differential polynomials, and local functionals}\label{subsubsection:formal loop space}
Introduce formal variables~$u^\alpha_i$, $\alpha=1,\ldots,N$, $i=0,1,\ldots$. Following~\cite{DZ01} (see also~\cite{Ros17}) we define the ring of \emph{differential polynomials} $\cA^0$ in the variables $u^1,\ldots,u^N$ as the ring of polynomials $f(u^*,u^*_x,u^*_{xx},\ldots)$ in the variables~$u^\alpha_i$, $i>0$, with coefficients in the ring of formal power series in the variables $u^\alpha=u^\alpha_0$:
$$
\cA^0\coloneqq\mbC[[u^*]][u^*_{\ge 1}].
$$

\begin{remark} This way we define a model of the loop space of the vector space $V$ by describing its ring of functions. In particular, it is useful to think of the variables $u^\alpha\coloneqq u^\alpha_0$ as the components $u^\alpha(x)$ of a formal loop $u\colon S^1\to V$ in the basis $e_1,\ldots,e_N$. Then the variables $u^\alpha_{1}\coloneqq u^\alpha_x, u^\alpha_{2}\coloneqq u^\alpha_{xx},\ldots$ are the components of the iterated $x$-derivatives of a formal loop.
\end{remark}

The \emph{standard gradation} on $\cA^0$, which we denote by $\deg$, is introduced by $\deg u^\alpha_i\coloneqq i$. The homogeneous component of $\cA^0$ of standard degree $d$ is denoted by $\cA^0_d$. The operator 
$$
\partial_x \coloneqq \sum_{i\geq 0} u^\alpha_{i+1}\frac{\partial}{\partial u^\alpha_i}
$$
increases the standard degree by $1$. Therefore, the quotient
$$
\Lambda^0\coloneqq\left.\cA^0\right/(\mbC\oplus\Im\,\d_x),
$$
called the space of \emph{local functionals}, inherits the standard gradation. The homogeneous component of $\Lambda^0$ of standard degree $d$ is denoted by $\Lambda^0_d$.  The natural projection $\cA^0\to\Lambda^0$ assigns to a differential polynomial $f$ the local functional $\overline{f}=\int f\,dx$.

The \emph{variational derivative} $\frac{\delta}{\delta u^\alpha}\colon \cA^0\to \cA^0$, $\alpha=1,\dots,N$, is defined by
$$
\frac{\delta}{\delta u^\alpha}\coloneqq\sum_{i\ge 0}(-\d_x)^i\circ\frac{\d}{\d u^\alpha_i}.
$$ 
Since it vanishes on $\mbC\oplus\Im\,\d_x$, it is well defined on the space of local functionals and abusing notation we denote it by the same symbol, $\frac{\delta}{\delta u^\alpha}\colon\Lambda^0\to\cA^0$.

We associate with a differential polynomial $f\in\cA^0$ a sequence of differential operators indexed by $\alpha=1,\ldots,N$ and $k\ge 0$: 
\begin{gather*}
L_\alpha^k(f)\coloneqq\sum_{i\ge k}{i\choose k}\frac{\d f}{\d u^\alpha_i}\d_x^{i-k}.
\end{gather*}
We often use the notation $L_\alpha(f)\coloneqq L^0_\alpha(f)$. 
These operators satisfy the property
\begin{gather*}
L_\alpha^k(\d_x f)=\d_x\circ L_\alpha^k(f)+L_\alpha^{k-1}(f),\quad k\ge 0,
\end{gather*}
where we adopt the convention $L_{\alpha}^{l}(f)\coloneqq 0$ for $l<0$. 

We associate with a local functional $\oh=\int h\,dx\in\Lambda^0$ a sequence of $N\times N$  matrices $\hOmega^k(\oh)=\left(\hOmega^k(\oh)^{\alpha\beta}\right)$ of differential operators, indexed by $k\ge 0$, defined as
$$
\hOmega^k(\oh)^{\alpha\beta}\coloneqq\eta^{\alpha\mu}\eta^{\beta\nu}L^k_\nu\left(\frac{\delta\oh}{\delta u^\mu}\right).
$$
We often use the notation $\hOmega(\oh)\coloneqq\hOmega^0(\oh)$.

Consider an $N\times N$ matrix~$K=(K^{\mu\nu})$ of differential operators of the form $K^{\mu\nu} = \sum_{j\geq 0} K^{\mu\nu}_j \partial_x^j$, where $K^{\mu\nu}_j\in\cA^0$ and the sum is finite. The conjugate matrix of differential operators $K^\dagger=\left((K^\dagger)^{\mu\nu}\right)$ is defined by
$$
(K^\dagger)^{\mu\nu}: = \sum_{j\geq 0} (-\partial_x)^j\circ K^{\nu\mu}_j.
$$

\begin{lemma}\label{lemma:property of hOmegak}
For any $\oh\in\Lambda^0$ we have $\hOmega^k(\oh)^\dagger=(-1)^k\hOmega^k(\oh)$.
\end{lemma}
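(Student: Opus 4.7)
The plan is to package the identities for all $k\ge 0$ into a single generating-function statement in a formal variable $z$. Introduce
\[
\hat L_\nu(f;z)\coloneqq\sum_{i\ge 0}\frac{\partial f}{\partial u^\nu_i}(z+\partial_x)^i=\sum_{k\ge 0}z^k L^k_\nu(f),
\]
which is simply the expression for $L^0_\nu(f)$ with $\partial_x$ replaced by $z+\partial_x$. After contracting with $\eta^{\alpha\mu}\eta^{\beta\nu}$ and using that the matrix transpose built into $\dagger$ exchanges the roles of the variation and differentiation indices, the lemma reduces to
\[
\hat L_\rho(\delta\oh/\delta u^\sigma;z)^\dagger=\hat L_\sigma(\delta\oh/\delta u^\rho;-z),
\]
where now $\dagger$ denotes the operator adjoint alone. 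Extracting the coefficient of $z^k$ recovers $L^k_\rho(\delta\oh/\delta u^\sigma)^\dagger=(-1)^k L^k_\sigma(\delta\oh/\delta u^\rho)$, and re-contracting with the metric yields the claim.

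The central step is to bring $\hat L_\rho(\delta\oh/\delta u^\sigma;z)$ into a form whose symmetry under $\rho\leftrightarrow\sigma$ is manifest. Starting from $\delta\oh/\delta u^\sigma=\sum_j(-\partial_x)^j\,\partial h/\partial u^\sigma_j$ and iterating the commutation $\frac{\partial}{\partial u^\rho_i}\partial_x=\partial_x\frac{\partial}{\partial u^\rho_i}+\frac{\partial}{\partial u^\rho_{i-1}}$ (with the convention $\partial/\partial u^\rho_{-1}\coloneqq 0$), one obtains the auxiliary identity $\frac{\partial}{\partial u^\rho_i}(-\partial_x)^j=\sum_{l=0}^{j}(-1)^l\binom{j}{l}(-\partial_x)^{j-l}\frac{\partial}{\partial u^\rho_{i-l}}$. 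Substituting into the definition of $\hat L_\rho$ and applying the Leibniz-type identity $(z+\partial_x)^j\circ a=\sum_m\binom{j}{m}(\partial_x^m a)(z+\partial_x)^{j-m}$ (valid for any $a\in\cA^0$) then collapses the resulting triple sum into the compact, manifestly symmetric expression
\[
\hat L_\rho(\delta\oh/\delta u^\sigma;z)=\sum_{i,j\ge 0}(-z-\partial_x)^j\circ\frac{\partial^2 h}{\partial u^\sigma_j\,\partial u^\rho_i}\circ(z+\partial_x)^i.
\]

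From this presentation the conclusion is routine. The operator adjoint sends $(z+\partial_x)^i\mapsto(z-\partial_x)^i$ and $(-z-\partial_x)^j\mapsto(-z+\partial_x)^j$, leaves multiplication by an element of $\cA^0$ untouched, and reverses the order of composition; after renaming the dummies $i\leftrightarrow j$ and invoking the symmetry $\partial^2 h/\partial u^\sigma_j\,\partial u^\rho_i=\partial^2 h/\partial u^\rho_j\,\partial u^\sigma_i$, the result matches the symmetric form of $\hat L_\sigma(\delta\oh/\delta u^\rho;-z)$. The principal obstacle is the combinatorial bookkeeping in this central step: three layers of summation (over $i$, $j$, and the commutator expansion index $l$) have to collapse into the clean sandwich $(-z-\partial_x)^j\circ h^{(2)}\circ(z+\partial_x)^i$; everything downstream is a formal manipulation of adjoints and relabelling.
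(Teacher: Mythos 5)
Your proof is correct, but it is organized quite differently from the paper's, so a comparison is worthwhile. The paper fixes $k$, reduces the claim to $\bigl(L_\mu^k(\delta\oh/\delta u^\nu)\bigr)^\dagger=(-1)^kL_\nu^k(\delta\oh/\delta u^\mu)$ exactly as you do, and then verifies this by a direct computation that commutes $\partial/\partial u^\mu_t$ past the variational derivative (the identity quoted from \cite{LZ11}) and finishes with a genuine cancellation, $\sum_{i+l=t}\tfrac{(-1)^i}{i!\,l!}(-\partial_x)^{i+l}=0$. You instead choose a density $h$ for $\oh$, assemble all $k$ into the generating operator $\hat L_\nu(f;z)=\sum_k z^kL^k_\nu(f)$, and rewrite it in the sandwich form $\sum_{i,j}(-z-\partial_x)^j\circ\frac{\partial^2h}{\partial u^\sigma_j\partial u^\rho_i}\circ(z+\partial_x)^i$, from which the adjoint symmetry is read off by inspection. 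I checked your central identity: after applying your commutation formula for $\partial/\partial u^\rho_i$ past $(-\partial_x)^j$ on one side, and the Leibniz rule $(z+\partial_x)^j\circ a=\sum_m\binom{j}{m}(\partial_x^m a)(z+\partial_x)^{j-m}$ on the other, the two triple sums match term by term under the reindexings $i\mapsto i+l$ and $m\mapsto j-m$, so the advertised collapse is a pure bijection of terms rather than a cancellation --- arguably cleaner than the paper's route. Your approach buys a manifestly symmetric presentation (the $k=0$ case is the classical statement that the Hessian operator of a local functional is self-adjoint) and treats all $k$ simultaneously; its only cost is the auxiliary choice of a density $h$, which is harmless because the derivation is valid for every representative and both sides of the final identity depend only on $\delta\oh/\delta u^{*}$. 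The paper's proof is shorter to write down but keeps the symmetry hidden until the final cancellation.
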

\begin{proof} It is a straightforward computation.  We have to check that
$$
\left(L_\mu^k\left(\frac{\delta\oh}{\delta u^\nu}\right)\right)^\dagger=(-1)^kL_\nu^k\left(\frac{\delta\oh}{\delta u^\mu}\right).
$$
To this end, we compute
\begin{align*}
\left(L_\mu^k\left(\frac{\delta\oh}{\delta u^\nu}\right)\right)^\dagger
& = \sum_{i=0}^\infty \binom{k+i}{i} (-\d_x)^i \circ \frac{\d}{\d u^{\mu}_{k+i}} \frac{\delta\oh}{\delta u^\nu}  = \sum_{i,j=0}^\infty \binom{k+i+j}{i,j} (-\d_x)^i \frac{\d}{\d u^{\mu}_{k+i+j}} \frac{\delta\oh}{\delta u^\nu} (-\d_x)^j\\
& = \sum_{i,j,l=0}^\infty (-1)^{k+i+j} \binom{k+i+j+l}{i,j,l} (-\d_x)^{i+l} \frac{\d}{\d u^{\nu}_{k+i+j+l}} \frac{\delta\oh}{\delta u^\mu} (-\d_x)^j \\
& = (-1)^k \sum_{j=0}^\infty \binom{k+j}{j} \frac{\d}{\d u^{\nu}_{k+j}} \frac{\delta\oh}{\delta u^\mu} \d_x^j = (-1)^kL_\nu^k\left(\frac{\delta\oh}{\delta u^\mu}\right).
\end{align*}
Here, in order to pass to the second line in the computation we use the identity
\[
\frac{\d}{\d u^{\mu}_{t}} \frac{\delta}{\delta u^\nu} = (-1)^t \sum_{l=0}^\infty \binom{l+t}{t} (-\d_x)^l \frac{\d}{\d u^{\nu}_{t+l}}\frac{\delta}{\delta u^\mu}
\]
(cf.~\cite[Lemma 2.1.5(i)]{LZ11}), and in the third line we use that for any $t\geq 1$
\[
\sum_{i+l = t} \frac 1{i!l!} (-1)^i (-\d_x)^{i+l} = 0.
\]
\end{proof}

Differential polynomials and local functionals can also be described using another set of formal variables, corresponding heuristically to the Fourier components $p^\alpha_k$, $k\in\mbZ$, of the functions $u^\alpha=u^\alpha(x)$.  We define a change of variables
\begin{gather}\label{eq:u-p change}
u^\alpha_j = \sum_{k\in\mbZ} (i k)^j p^\alpha_k e^{i k x}, 
\end{gather}
which allows us to express a differential polynomial $f(u,u_x,u_{xx},\ldots)\in \cA^0$ as a formal Fourier series in $x$. In the latter expression the coefficient of $e^{i k x}$ is a power series in the variables~$p^\alpha_j$ with the sum of the subscripts in each monomial in $p^\alpha_j$ equal to $k$. Moreover, the local functional~$\overline{f}$ corresponds to the constant term of the Fourier series of $f$.

\subsubsection{Poisson brackets and Hamiltonian hierarchies}

Let us describe a natural class of Poisson brackets on the space of local functionals. Given an $N\times N$ matrix~$K=(K^{\mu\nu})$ of differential operators of the form $K^{\mu\nu} = \sum_{j\geq 0} K^{\mu\nu}_j \partial_x^j$, where $K^{\mu\nu}_j\in\cA^0$ and the sum is finite, we define
$$
\{\overline{f},\overline{g}\}_{K}\coloneqq\int\left(\frac{\delta \overline{f}}{\delta u^\mu}K^{\mu \nu}\frac{\delta \overline{g}}{\delta u^\nu}\right)dx.
$$
The bracket $\{\cdot,\cdot\}_K$ is skew-symmetric if and only if $K^\dagger=-K$. The bracket $\{\cdot,\cdot\}_K$ satisfies the Jacobi identity if and only if the Schouten--Nijenhuis bracket of the bracket with itself vanishes,
$$
\left[\{\cdot,\cdot\}_K,\{\cdot,\cdot\}_K\right]=0,
$$
which is discussed in details in Section~\ref{subsection:schouten bracket}. 

\begin{definition}
An operator $K$ is called \emph{Poisson}, if the bracket $\{\cdot,\cdot\}_K$ is skew-symmetric and satisfies the Jacobi identity.	
\end{definition}

\begin{example}
 A standard example of a Poisson operator is given by the operator $\eta^{-1}\partial_x$. The corresponding Poisson bracket has a nice expression in terms of the variables $p^\alpha_k$:
\begin{gather*}
\{p^\alpha_k, p^\beta_j\}_{\eta^{-1}\partial_x} = i k \eta^{\alpha \beta} \delta_{k+j,0}.
\end{gather*}
\end{example}

Consider the extensions $\hcA^0\coloneqq \cA^0\otimes \mbC[[\eps]]$ and $\hLambda^0\coloneqq \Lambda^0\otimes \mbC[[\eps]]$ of the spaces $\cA^0$ and $\Lambda^0$ with a new variable~$\eps$ of standard gradation $\deg\eps\coloneqq -1$. Let $\hcA^0_k$ and~$\hLambda^0_k$ denote the subspaces of degree~$k$ of $\hcA^0$ and $\hLambda^0$, respectively. Abusing the terminology we still call their elements \emph{differential polynomials} and \emph{local functionals}. 

We can also define a bracket $\{\cdot,\cdot\}_K\colon\hLambda^0\times\hLambda^0\to\hLambda^0$ as above starting from an operator $K=(K^{\mu\nu})$, $K^{\mu\nu} = \sum_{i,j\ge 0} K^{[i],\mu\nu}_j\eps^i\partial_x^j$, $K^{[i],\mu\nu}_j\in\cA^0$, where for each $i\ge 0$ the sum $\sum_{j\ge 0} K^{[i],\mu\nu}_j\d_x^j$ is finite. An operator $K=(K^{\alpha\beta})$ and  the corresponding bracket $\{\cdot,\cdot\}_K$ have degree $d$ if $K^{\alpha\beta}=\sum_{s\ge 0}K^{\alpha\beta}_s\d_x^s$, where $K^{\alpha\beta}_s\in\hcA^0_{-s+d}$. 

\begin{definition}A {\it Hamiltonian hierarchy} of PDEs is a system of the form
\begin{gather}\label{eq:Hamiltonian system}
\frac{\partial u^\alpha}{\partial \tau_i} = K^{\alpha\mu} \frac{\delta\overline{h}_i}{\delta u^\mu}, \quad 1\le\alpha\le N,\quad i\ge 1,
\end{gather}
where $\oh_i\in\hLambda^0_0$, $K=(K^{\mu\nu})$ is a Poisson operator of degree $1$, and the compatibility condition $\{\oh_i,\oh_j\}_K=0$ for $i,j\geq 1$ is satisfied. 
	
The local functionals~$\oh_i$ are called the {\it Hamiltonians} of the system~\eqref{eq:Hamiltonian system}. 
\end{definition}

Consider Hamiltonian hierarchies of the form
\begin{gather}\label{eq:Hamiltonian system,2}
\frac{\d u^\alpha}{\d t^\beta_q} = K_1^{\alpha\mu}\frac{\delta\oh_{\beta,q}}{\delta u^\mu}, \quad 1\le\alpha,\beta\le N,\quad q\ge 0,
\end{gather}
equipped with $N$ linearly independent Casimirs $\oh_{\alpha,-1}$, $1\le\alpha\le N$, of the Poisson bracket~$\{\cdot,\cdot\}_{K_1}$. Two Poisson operators $K_1$ and $K_2$ are said to be {\it compatible} if the linear combination $K_2-\lambda K_1$ is a Poisson operator for any $\lambda\in\mbC$. 

\begin{definition} A Hamiltonian hierarchy~\eqref{eq:Hamiltonian system,2} is said to be {\it bihamiltonian} if it is endowed with a Poisson operator~$K_2$ of degree~$1$ compatible with the operator~$K_1$ and such that
\begin{gather}\label{eq:bihamiltonian recursion}
\{\cdot,\oh_{\alpha,i-1}\}_{K_2}=\sum_{j=0}^i R^{j,\beta}_{i,\alpha}\{\cdot,\oh_{\beta,i-j}\}_{K_1},\quad 1\le\alpha\le N,\quad i\ge 0,
\end{gather}
where $R^j_i=(R^{j,\beta}_{i,\alpha})$, $0\le j\le i$, are constant $N\times N$ matrices.
	
The relation~\eqref{eq:bihamiltonian recursion} is called a {\it bihamiltonian recursion}. 
\end{definition}

\begin{remark}
Note that if the matrices $R^0_i$, $i\ge 0$, are invertible, then the bihamiltonian recursion~\eqref{eq:bihamiltonian recursion} determines the local functionals $\oh_{\alpha,i}$, $i\ge 0$, uniquely up to a triangular transformation
$$
\oh_{\alpha,i}\mapsto\oh_{\alpha,i}+\sum_{j=1}^{i+1} A^{j,\beta}_\alpha\oh_{\beta,i-j},\quad 1\le\alpha\le N,\quad i\ge 0,\qquad A^{j,\beta}_\alpha\in\mbC. 
$$
\end{remark}

In the bihamiltonian hierarchies considered below all but finitely many matrices $R^0_i$ are invertible.

\subsubsection{Construction of the double ramification hierarchy} Denote by $\psi_i\in H^2(\oM_{g,n})$ the first Chern class of the line bundle over~$\oM_{g,n}$ formed by the cotangent lines at the $i$-th marked point. Denote by~$\mathbb E$ the rank~$g$ Hodge vector bundle over~$\oM_{g,n}$ whose fibers are the spaces of holomorphic one-forms on stable curves. Let $\lambda_j\coloneqq c_j(\mathbb E)\in H^{2j}(\oM_{g,n})$. 

For any $a_1,\dots,a_n\in \mbZ$, $\sum_{i=1}^n a_i =0$, denote by $\DR_g(a_1,\ldots,a_n) \in H^{2g}(\oM_{g,n})$ the {\it double ramification (DR) cycle}. We refer the reader, for example, to~\cite{BSSZ15} for the definition of the DR cycle on~$\oM_{g,n}$, which is based on the notion of a stable map to $\CP^1$ relative to~$0$ and~$\infty$. If not all the multiplicities $a_i$ are equal to zero, then one can think of the class $\DR_g(a_1,\ldots,a_n)$ as the Poincar\'e dual to a compactification in~$\oM_{g,n}$ of the locus of pointed smooth curves~$(C;p_1,\ldots,p_n)$ satisfying $\mathcal O_C\left(\sum_{i=1}^n a_ip_i\right)\cong\mathcal O_C$. Consider the Poincar\'e dual to the double ramification cycle~$\DR_g(a_1,\ldots,a_n)$ in the space $\oM_{g,n}$. It is an element of $H_{2(2g-3+n)}(\oM_{g,n})$, and abusing notation it is also denoted by $\DR_g(a_1,\ldots,a_n)$. In genus $0$ we have $\DR_0(a_1,\ldots,a_n)=1\in H^0(\oM_{0,n})$. 

Consider a CohFT $\{c_{g,n}\colon V^{\otimes n} \to H^{\even}(\oM_{g,n})\}$. The Hamiltonians of the \emph{double ramification hierarchy} are defined as follows:
\begin{gather}\label{DR Hamiltonians}
\og_{\alpha,d}\coloneqq\sum_{\substack{g\ge 0\\n\ge 2}}\frac{(-\eps^2)^g}{n!}\sum_{\substack{a_1,\ldots,a_n\in\mbZ\\\sum a_i=0}}\left(\int_{\DR_g(0,a_1,\ldots,a_n)}\lambda_g\psi_1^d c_{g,n+1}(e_\alpha\otimes\otimes_{i=1}^n e_{\alpha_i})\right)\prod_{i=1}^n p^{\alpha_i}_{a_i},
\end{gather}
for $\alpha=1,\ldots,N$ and $d=0,1,2,\ldots$. 

The expression on the right-hand side of~\eqref{DR Hamiltonians} can be uniquely written as a local functional from $\hLambda^0_0$ using the change of variables~\eqref{eq:u-p change}. Concretely, it can be done in the following way. The restriction $\DR_g(a_1,\ldots,a_n)\big|_{\cM_{g,n}^{\ct}}$, where $\cM_{g,n}^{\ct}$ is the moduli space of stable curves of compact type, is a homogeneous polynomial in $a_1,\ldots,a_n$ of degree $2g$ with the coefficients in $H^{2g}(\cM_{g,n}^{\ct})$. This follows from Hain's formula~\cite{Hai13} for the version of the DR cycle defined using the universal Jacobian over~$\cM^\ct_{g,n}$ and the result of the paper~\cite{MW13}, where it is proved that the two versions of the DR cycle coincide on $\cM^\ct_{g,n}$ (the polynomiality of the DR cycle on~$\oM_{g,n}$ is proved in~\cite{JPPZ17}). The polynomiality of the DR cycle on $\cM^\ct_{g,n}$ together with the fact that~$\lambda_g$ vanishes on~$\oM_{g,n}\setminus\cM_{g,n}^{\ct}$ (see e.g.~\cite[Section~0.4]{FP00}) imply that the integral
\begin{gather}\label{DR integral}
\int_{\DR_g(0,a_1,\ldots,a_n)}\lambda_g\psi_1^d c_{g,n+1}(e_\alpha\otimes\otimes_{i=1}^n e_{\alpha_i})
\end{gather}
is a homogeneous polynomial in $a_1,\ldots,a_n$ of degree~$2g$,  which we denote by 
\begin{equation*}
P_{\alpha,d,g;\alpha_1,\ldots,\alpha_n}(a_1,\ldots,a_n)=\sum_{\substack{b_1,\ldots,b_n\ge 0\\b_1+\ldots+b_n=2g}}P_{\alpha,d,g;\alpha_1,\ldots,\alpha_n}^{b_1,\ldots,b_n}a_1^{b_1}\ldots a_n^{b_n}.
\end{equation*}
Then we have
$$
\og_{\alpha,d}=\int\sum_{\substack{g\ge 0\\n\ge 2}}\frac{\eps^{2g}}{n!}\sum_{\substack{b_1,\ldots,b_n\ge 0\\b_1+\ldots+b_n=2g}}P_{\alpha,d,g;\alpha_1,\ldots,\alpha_n}^{b_1,\ldots,b_n} u^{\alpha_1}_{b_1}\ldots u^{\alpha_n}_{b_n}dx.
$$

Note that the integral~\eqref{DR integral} is defined only when $a_1+\cdots+a_n=0$. Therefore, the polynomial~$P_{\alpha,d,g;\alpha_1,\ldots,\alpha_n}$ is not uniquely defined. However, the resulting local functional $\og_{\alpha,d}\in\hLambda^0_0$ doesn't depend on this ambiguity, see~\cite{Bur15}. In fact, in \cite{BR16} a special choice of the differential polynomial densities $g_{\alpha,d} \in \hcA^0_0$ for $\og_{\alpha,d} = \int g_{\alpha,d}\,dx$ is proposed. The densities are defined in terms of the $p$-variables as
$$
g_{\alpha,d}\coloneqq\sum_{\substack{g\ge 0,\,n\ge 1\\2g-1+n>0}}\frac{(-\eps^2)^g}{n!}\sum_{\substack{a_0,\ldots,a_n\in\mbZ\\\sum a_i=0}}\left(\int_{\DR_g(a_0,a_1,\ldots,a_n)}\lambda_g\psi_1^d c_{g,n+1}(e_\alpha\otimes \otimes_{i=1}^n e_{\alpha_i})\right)\prod_{i=1}^n p^{\alpha_i}_{a_i} e^{-i a_0 x}
$$
and converted unequivocally to differential polynomials using the change of variables~(\ref{eq:u-p change}).

It is proved in~\cite{Bur15} that the local functionals~$\og_{\alpha,d}$ mutually commute with respect to the standard bracket~$\{\cdot,\cdot\}_{\eta^{-1}\d_x}$.

\begin{definition}
 The system of local functionals $\og_{\alpha,d}$, for $\alpha=1,\ldots,N$, $d=0,1,2,\ldots$, and the corresponding system of Hamiltonian PDEs with respect to the Poisson bracket~$\{\cdot,\cdot\}_{\eta^{-1}\partial_x}$,
$$
\frac{\d u^\alpha}{\d t^\beta_q}=\eta^{\alpha\mu}\d_x\frac{\delta\og_{\beta,q}}{\delta u^\mu},\quad 1\le\alpha,\beta\le N,\quad q\ge 0,
$$
is called the \emph{double ramification hierarchy} (or the \emph{DR hierarchy} for brevity). 	
\end{definition}

\subsubsection{Extra structures for the double ramification hierarchy}
We can equip the DR hierarchy with the following~$N$ linearly independent Casimirs of its Poisson bracket $\{\cdot,\cdot\}_{\eta^{-1}\d_x}$:
$$
\og_{\alpha,-1}\coloneqq\int\eta_{\alpha\beta}u^\beta dx,\quad 1\le\alpha\le N.
$$

Another important object related to the DR hierarchy is the local functional $\og$ defined in terms of the $p$-variables as
\begin{gather*}
\og\coloneqq\sum_{\substack{g,n\ge 0\\2g-2+n>0}}\frac{(-\eps^2)^g}{n!}\sum_{\substack{a_1,\ldots,a_n\in\mbZ\\\sum a_i=0}}\left(\int_{\DR_g(a_1,\ldots,a_n)}\lambda_gc_{g,n}(\otimes_{i=1}^n e_{\alpha_i})\right)\prod_{i=1}^n p^{\alpha_i}_{a_i}.
\end{gather*}
We have the following relations~\cite[Section 4.2.5]{Bur15}:
\begin{gather}\label{eq:properties of og}
\og_{\un,1}=(D-2)\og\qquad \text{and} \qquad\frac{\d\og}{\d u^\alpha}=\og_{\alpha,0},
\end{gather}
where $D\coloneqq\sum_{n\ge 0}(n+1)u^\alpha_n\frac{\d}{\d u^\alpha_n}$ and $\og_{\un,1}\coloneqq A^\alpha\og_{\alpha,1}$. Also, the local functional~$\og$ has the following explicit expression at the approximation up to~$\eps^2$ (\cite[Lemma~8.1]{BDGR18}):
\begin{gather}\label{eq:og up to genus 1}
\og=\int \left.F\right|_{t^*=u^*}dx-\frac{\eps^2}{48}\int\left.\left(c^\theta_{\theta\xi}c^\xi_{\alpha\beta}\right)\right|_{t^*=u^*}u^\alpha_xu^\beta_x dx+O(\eps^4).
\end{gather}


\subsection{Bihamiltonian structure for the double ramification hierarchy}

Consider a homogeneous cohomological field theory and the associated DR hierarchy. Let 
$$
\mu_\alpha\coloneqq q_\alpha-\frac{\delta}{2}
$$
and define a diagonal matrix $\mu$ by
$$
\mu\coloneqq\diag(\mu_1,\ldots,\mu_N).
$$
Note that 
\begin{gather}\label{eq:mu and eta}
\mu\eta+\eta\mu=0.
\end{gather}
Introduce an operator $\hE$ by
$$
\hE\coloneqq\sum_{n\ge 0}\left((1-q_\alpha)u^\alpha_n+\delta_{n,0}r^\alpha\right)\frac{\d}{\d u^\alpha_n}+\frac{1-\delta}{2}\eps\frac{\d}{\d\eps}.
$$

\begin{definition}
Define an operator $K_2$ of degree~$1$ by
\begin{gather*}
K_2\coloneqq\hE\left(\hOmega(\og)\right)\circ\d_x+\hOmega(\og)_x\circ\left(\frac{1}{2}-\mu\right)+\d_x\circ\hOmega^1(\og)\circ\d_x,
\end{gather*}
where the notation $\hE\big(\hOmega(\og)\big)$ (respectively, $\hOmega(\og)_x$) means that we apply the operator $\hE$ (respectively, $\d_x$) to the coefficients of the operator $\hOmega(\og)$.
\end{definition}

\begin{conjecture}\label{main conjecture} {\ }
\begin{enumerate}
	\item The operator $K_2$ is Poisson and compatible with the operator $K_1\coloneqq\eta^{-1}\d_x$.
	\item The Poisson brackets $\{\cdot,\cdot\}_{K_2}$ and $\{\cdot,\cdot\}_{K_1}$ give a bihamiltonian structure for the DR hierarchy with the following bihamiltonian recursion:
	\begin{gather}\label{eq:DR biham}
	\left\{\cdot,\og_{\alpha,d}\right\}_{K_2}=\left(d+\frac{3}{2}+\mu_\alpha\right)\left\{\cdot,\og_{\alpha,d+1}\right\}_{K_1}+A^\beta_\alpha\left\{\cdot,\og_{\beta,d}\right\}_{K_1},\quad d\ge -1,
	\end{gather}
	where $A^\beta_\alpha\coloneqq\eta^{\beta\nu}A_{\nu\alpha}$.
\end{enumerate}
\end{conjecture}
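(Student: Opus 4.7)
The plan is to decompose the conjecture into four sub-claims — skew-symmetry of $K_2$, the bihamiltonian recursion~\eqref{eq:DR biham}, compatibility of $K_1$ with $K_2$, and the Jacobi identity for $K_2$ — and attack them in order of increasing difficulty. Skew-symmetry and the base case $d=-1$ of the recursion should be direct; compatibility will be conditional on Jacobi; and Jacobi for $K_2$ will be the single genuine obstacle.

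Skew-symmetry $K_2^\dagger=-K_2$ should be a short application of Lemma~\ref{lemma:property of hOmegak}. The third term of $K_2$ is skew by the $k=1$ case of the lemma and the identity $(\d_x\circ A\circ\d_x)^\dagger = -\d_x\circ A^\dagger\circ\d_x$; for the other two terms, I would first show that $\hE$ and $\d_x$-differentiation of coefficients preserve the self-adjointness of $\hOmega(\og)$, then use $\d_x\circ\hOmega(\og)=\hOmega(\og)\circ\d_x+\hOmega(\og)_x$ to convert $(\hE(\hOmega(\og))\circ\d_x)^\dagger$ into $-\hE(\hOmega(\og))\circ\d_x-\hE(\hOmega(\og))_x$, matching against the middle term $\hOmega(\og)_x\circ(\tfrac12-\mu)$ via the anticommutation $\mu\eta+\eta\mu=0$ in~\eqref{eq:mu and eta}.

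For the base case $d=-1$ of~\eqref{eq:DR biham}, since $\delta\og_{\alpha,-1}/\delta u^\nu=\eta_{\alpha\nu}$ is constant, the $\d_x$-containing pieces of $K_2$ annihilate it and the $A^\beta_\alpha$-term on the right-hand side vanishes because $\og_{\beta,-1}$ are Casimirs of $K_1$; using $\og_{\alpha,0}=\d\og/\d u^\alpha$ from~\eqref{eq:properties of og} and $[\delta/\delta u^\nu,\d/\d u^\alpha]=0$, the recursion collapses to the algebraic statement
\begin{equation*}
\sum_{\sigma}(\tfrac12-\mu_\sigma)\,\eta^{\sigma\rho}\eta_{\alpha\sigma}\;=\;(\tfrac12+\mu_\alpha)\,\delta^\rho_\alpha,
\end{equation*}
which is immediate from~\eqref{eq:mu and eta}. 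The inductive step $d\mapsto d+1$ is more delicate: I would attempt to translate the change $\psi_1^d\mapsto\psi_1^{d+1}$ inside~\eqref{DR Hamiltonians} into an operator-level identity combining the action of $\hE$ with the affine shift encoded by $A^\beta_\alpha$, via the comparison $\pi^*\psi_1-\psi_1$ on the universal curve over $\oM_{g,n+1}$ together with the homogeneity relation~\eqref{eq:definition of a homogeneous CohFT} for the underlying CohFT. Assuming Jacobi for $K_2$, compatibility $[\{\cdot,\cdot\}_{K_1},\{\cdot,\cdot\}_{K_2}]=0$ should come out of a direct Schouten--Nijenhuis computation as in Section~\ref{subsection:schouten bracket}: since $K_1$ has constant coefficients, only the pieces in which $K_1$ hits the $\og$-dependent coefficients of $K_2$ survive, and these rearrange, through translation invariance of $\og$ and the $\tfrac12$-$\mu$-$\d_x$ structure of $K_2$, into a total $x$-derivative of the $K_2$-Jacobiator.

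The hard part will be the Jacobi identity for $K_2$ itself. Its Schouten bracket is a cubic tri-differential expression in $\hOmega^k(\og)$ and $\hE(\hOmega(\og))$, and no structural cancellation is evident from the definition. The most tractable approach is to proceed order by order in $\eps$: at $O(\eps^0)$, using $\og=\int F|_{t^*=u^*}\,dx+O(\eps^2)$ from~\eqref{eq:og up to genus 1}, the pencil reduces to a hydrodynamic-type Poisson pencil on the underlying Dubrovin--Frobenius manifold, for which Jacobi is the classical flat-pencil statement; at $O(\eps^2)$, the explicit genus-$1$ correction from~\eqref{eq:og up to genus 1} plugged into the Schouten calculation should cancel against contributions from $\hE$ acting on the $O(\eps^0)$ piece. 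Beyond genus~$1$, a closed-form verification seems to require exploiting the tautological relations in $H^*(\cM_{g,n}^{\ct})$ underlying the polynomiality of $\DR_g$ (via Hain's formula and the $\lambda_g$-vanishing outside $\cM_{g,n}^{\ct}$), or alternatively transferring the analogous (partially known) Jacobi identity for the second DZ bracket across the DR/DZ Miura transformation of~\cite{BDGR18}; either route is where the bulk of the technical effort will lie.
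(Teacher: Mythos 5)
You are trying to prove the paper's main \emph{conjecture}: the paper itself does not prove it, and neither does your proposal. What you have written is a plan whose two hardest components --- the Jacobi identity for $K_2$, and the recursion~\eqref{eq:DR biham} for $d\ge 0$ in positive genus --- are explicitly left unresolved, so as a proof it has an irreducible gap at exactly the places where the paper also stops. The paper's evidence consists of: the full conjecture at $\eps=0$ (where both parts reduce to classical flat-pencil statements, with the recursion following from the homogeneity~\eqref{eq:homogeneity of mcF0} of $\mcF_0$ and the topological recursion relations~\eqref{eq:TRR-0}); the case $d=-1$; the computation of the central invariants assuming Jacobi; and explicit verifications for the trivial, $r$-spin ($r\le 5$) and $\CP^1$ CohFTs. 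Your sketches of the tractable pieces (skew-symmetry, $d=-1$) essentially match the paper, with one caveat: your skew-symmetry argument silently needs the identity $\hE\bigl(\hOmega(\og)\bigr)=\hOmega(\og)\circ(\tfrac12-\mu)+(\tfrac12-\mu)\circ\hOmega(\og)+\eta^{-1}A\eta^{-1}$, which does not follow from $\mu\eta+\eta\mu=0$ alone but from the Euler equation $\hE\og=(3-\delta)\og+\int\tfrac12 A_{\alpha\beta}u^\alpha u^\beta\,dx$, itself obtained by integrating the homogeneity axiom~\eqref{eq:definition of a homogeneous CohFT} against $\lambda_g\DR_g(a_1,\dots,a_n)$; you must establish that first (it is the content of the paper's Lemma giving~\eqref{eq:second expression for K2}).

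The one step where your route is genuinely defective rather than merely incomplete is compatibility. You make $[B_{K_1},B_{K_2}]=0$ conditional on the Jacobi identity for $K_2$, via the claim that the cross terms ``rearrange into a total $x$-derivative of the $K_2$-Jacobiator''; there is no general identity expressing $[B_{K_1},B_{K_2}]$ through $[B_{K_2},B_{K_2}]$, and nothing in your sketch supports such a rearrangement. The paper proves compatibility \emph{unconditionally}: it exhibits $B_{K_2}=[V_{\oR},B_{K_1}]$ for the explicit local vector field $R^\alpha=\eta^{\alpha\beta}\bigl((-\tfrac12-\mu_\beta)\tfrac{\delta\og}{\delta u^\beta}-\tfrac12 A_{\beta\gamma}u^\gamma+\d_x\cT_{\beta,1}(g)\bigr)$ and then invokes $[[V_{\oR},B_{K_1}],B_{K_1}]=0$, which holds for \emph{any} vector field because $K_1$ is Poisson. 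You should replace your conditional argument by this mechanism; it both removes the dependence on Jacobi and is far shorter. Finally, your proposed inductive step for~\eqref{eq:DR biham} via the comparison of $\pi^*\psi_1$ with $\psi_1$ is not developed far enough to assess, and since the paper does not attempt it beyond genus $0$, you cannot lean on the source here: this, together with $[B_{K_2},B_{K_2}]=0$, is precisely the open content of the conjecture.
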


\begin{lemma} The operator $K_2$ has the following alternative expression:
\begin{gather}\label{eq:second expression for K2}
K_2=\d_x\circ\hOmega(\og)\circ\left(\frac{1}{2}-\mu\right)+\left(\frac{1}{2}-\mu\right)\circ\hOmega(\og)\circ\d_x+\eta^{-1}A\eta^{-1}\d_x+\d_x\circ\hOmega^1(\og)\circ\d_x,
\end{gather}
where $A=(A_{\alpha\beta})$.
\end{lemma}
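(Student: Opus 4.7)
The plan is to match the two expressions for $K_2$ term by term. Both contain $\d_x \circ \hOmega^1(\og) \circ \d_x$ unchanged, so I focus on the remaining two summands. Since $\mu$ is a constant diagonal matrix commuting with $\d_x$, a direct Leibniz computation yields $\hOmega(\og)_x = \d_x \circ \hOmega(\og) - \hOmega(\og) \circ \d_x$, hence
$$\hOmega(\og)_x \circ (\tfrac12 - \mu) = \d_x \circ \hOmega(\og) \circ (\tfrac12 - \mu) - \hOmega(\og) \circ (\tfrac12 - \mu) \circ \d_x.$$
Substituting this into the definition of $K_2$ and comparing with~\eqref{eq:second expression for K2}, the lemma becomes equivalent to the homogeneity identity
\begin{equation*}
\hE\bigl(\hOmega(\og)\bigr) = \bigl(\tfrac12-\mu\bigr)\circ\hOmega(\og) + \hOmega(\og)\circ\bigl(\tfrac12-\mu\bigr) + \eta^{-1}A\eta^{-1}. \tag{$\star$}
\end{equation*}

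To prove $(\star)$ I would first establish the scalar quasi-homogeneity
\begin{equation*}
\hE(\og) = (3-\delta)\,\og + \tfrac12 \int A_{\alpha\beta}\,u^\alpha u^\beta\,dx, \tag{$\star\star$}
\end{equation*}
which, at the genus-$0$ level given in~\eqref{eq:og up to genus 1}, is exactly the Euler identity $E^\mu \partial_\mu F = (3-\delta)F + \tfrac12 A_{\alpha\beta} t^\alpha t^\beta$ for the Frobenius potential, and at higher genera is the loop-space reflection of the conformality axiom~\eqref{eq:definition of a homogeneous CohFT}. From the definition of $\hE$ one checks directly that $[\hE,\d/\d u^\alpha_i] = -(1-q_\alpha)\,\d/\d u^\alpha_i$ and $[\hE,\d_x]=0$, which upgrade to the commutation relations
$$\hE\!\left(\tfrac{\delta\og}{\delta u^\mu}\right) = \tfrac{\delta(\hE\og)}{\delta u^\mu} - (1-q_\mu)\tfrac{\delta\og}{\delta u^\mu}, \qquad \hE\bigl(L_\nu(f)\bigr) = L_\nu\bigl(\hE(f)\bigr) - (1-q_\nu)\,L_\nu(f).$$
Chaining these with $(\star\star)$ and using $L_\nu(A_{\mu\beta} u^\beta) = A_{\mu\nu}$ gives
$$\hE\bigl(\hOmega(\og)\bigr)^{\alpha\beta} = \eta^{\alpha\mu}\eta^{\beta\nu}\Bigl[(1+\mu_\mu+\mu_\nu)\,L_\nu\!\bigl(\tfrac{\delta\og}{\delta u^\mu}\bigr) + A_{\mu\nu}\Bigr],$$
where I used $3-\delta - 2 + q_\mu + q_\nu = 1 + \mu_\mu + \mu_\nu$. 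The antisymmetry~\eqref{eq:mu and eta}, $\mu\eta+\eta\mu=0$, lets me replace $\mu_\mu$ and $\mu_\nu$ by $-\mu_\alpha$ and $-\mu_\beta$ under the $\eta$-contractions, and the resulting expression is precisely the right-hand side of $(\star)$.

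The main obstacle is establishing $(\star\star)$. In genus $0$ it follows at once from the Euler identity for $F$ and the expansion~\eqref{eq:og up to genus 1}, but in full generality one must apply the conformality axiom~\eqref{eq:definition of a homogeneous CohFT} to the Hodge-twisted class $\lambda_g\,c_{g,n+1}$ integrated over $\DR_g$, keeping careful track of the $\eps$-weight against the $\tfrac{1-\delta}{2}\,\eps\,\d/\d\eps$ piece of $\hE$, as well as of the dimensional count between the degrees of $\lambda_g$, $\psi_1^d$, $c_{g,n+1}$ and the codimension $2g$ imposed by the DR cycle. I expect this bookkeeping to be the technical heart of the proof, so I would look for such a quasi-homogeneity statement in \cite{Bur15} or \cite{BDGR18} before deriving it from scratch.
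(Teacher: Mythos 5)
Your proposal follows essentially the same route as the paper: reduce the identity to $\hE\bigl(\hOmega(\og)\bigr)=\hOmega(\og)\circ\bigl(\tfrac12-\mu\bigr)+\bigl(\tfrac12-\mu\bigr)\circ\hOmega(\og)+\eta^{-1}A\eta^{-1}$, establish the quasi-homogeneity $\hE\og=(3-\delta)\og+\tfrac12\int A_{\alpha\beta}u^\alpha u^\beta\,dx$, and transport it through $\frac{\delta}{\delta u^\mu}$ and $L_\nu$ via the commutators $[\hE,\d_x]=0$ and $[\hE,\d/\d u^\alpha_i]=-(1-q_\alpha)\,\d/\d u^\alpha_i$; your index bookkeeping with $\mu\eta+\eta\mu=0$ matches the paper's. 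The one step you leave open, the quasi-homogeneity $(\star\star)$, is exactly what the paper proves and not merely cites: it multiplies the conformality axiom~\eqref{eq:definition of a homogeneous CohFT} by $\lambda_g\DR_g(a_1,\ldots,a_n)$, integrates over $\oM_{g,n}$, and uses that $\Deg$ acts on the complementary degree as multiplication by $3g-3+n-2g=g-3+n$, which produces the $-3\og$, the $\sum_a p^\alpha_a\d\og/\d p^\alpha_a$, and the $\eps$-grading terms; your sketch of this bookkeeping is the right idea and would complete the argument.
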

\begin{proof}
Equation~\eqref{eq:second expression for K2} is equivalent to
\begin{align*}
\hE\left(\hOmega(\og)\right)=&\hOmega(\og)\circ\left(\frac{1}{2}-\mu\right)+\left(\frac{1}{2}-\mu\right)\circ\hOmega(\og)+\eta^{-1}A\eta^{-1}\,,
\end{align*}
which is equivalent to
\begin{align*}
\hE\left(L_\alpha\left(\frac{\delta\og}{\delta u^\beta}\right)\right)=&(q_\alpha+q_\beta+1-\delta)L_\alpha\left(\frac{\delta\og}{\delta u^\beta}\right)+A_{\alpha\beta}.
\end{align*}

Multiplying both sides of Equation~\eqref{eq:definition of a homogeneous CohFT} by $\lambda_g\DR_g(a_1,\ldots,a_n)$, integrating over $\oM_{g,n}$, and taking the generating series we obtain
$$
\left(\frac{\eps}{2}\frac{\d\og}{\d\eps}-3\og+\sum_{a\in\mbZ}p^\alpha_a\frac{\d\og}{\d p^\alpha_a}\right)+\left(r^\alpha\frac{\d\og}{\d u^\alpha}-\int\frac{1}{2}A_{\alpha\beta}u^\alpha u^\beta dx\right)=\sum_{a\in\mbZ}q_\alpha p^\alpha_a\frac{\d\og}{\d p^\alpha_a}+\left(\frac{\delta}{2}\eps\frac{\d\og}{\d\eps}-\delta\og\right),
$$
where we used that, since the (complex) dimension of $\oM_{g,n}$ is equal to $3g-3+n$, we have $\int_{\oM_{g,n}}\lambda_g\DR_g(a_1,\ldots,a_n)\Deg c_{g,n}(\otimes_{i=1}^n e_{\alpha_i})=(g-3+n)\int_{\oM_{g,n}}\lambda_g\DR_g(a_1,\ldots,a_n)c_{g,n}(\otimes_{i=1}^n e_{\alpha_i})$. Therefore, we get
$$
\sum_{a\in\mbZ}(1-q_\alpha)p^\alpha_a\frac{\d\og}{\d p^\alpha_a}+r^\alpha\frac{\d\og}{\d u^\alpha}+\frac{1-\delta}{2}\eps\frac{\d\og}{\d\eps}=(3-\delta)\og.
$$
Noting that in the $u$-variables the operator $\sum_{a\in\mbZ}(1-q_\alpha)p^\alpha_a\frac{\d}{\d p^\alpha_a}$ is equal to $\sum_{k\ge 0}(1-q_\alpha)u^\alpha_k\frac{\d}{\d u^\alpha_k}$, we obtain
$$
\hE\og=(3-\delta)\og+\int\frac{1}{2}A_{\alpha\beta}u^\alpha u^\beta dx.
$$

This implies
\begin{align*}
\hE\left(L_\alpha\left(\frac{\delta\og}{\delta u^\beta}\right)\right)
& =L_\alpha\left(\frac{\delta(\hE\og)}{\delta u^\beta}\right)-(2-q_\alpha-q_\beta)L_\alpha\left(\frac{\delta\og}{\delta u^\beta}\right)
\\
& =(q_\alpha+q_\beta+1-\delta)L_\alpha\left(\frac{\delta\og}{\delta u^\beta}\right)+A_{\alpha\beta}\,,
\end{align*}
as required.
\end{proof}

\begin{remark} Formula~\eqref{eq:second expression for K2} together with Lemma~\ref{lemma:property of hOmegak} immediately implies that the operator~$K_2$ is skew-symmetric.
\end{remark}


\section{Evidence I: checks for a general CohFT}\label{section:general checks}

In this section we check some parts of Conjecture~\ref{main conjecture} for an arbitrary homogeneous CohFT~$\{c_{g,n}\}$: we prove it in genus $0$, examine the first step of the bihamiltonian recursion, and compute the Schouten--Nijenhuis bracket of $\{\cdot,\cdot\}_{K_2}$ and $\{\cdot,\cdot\}_{K_1}$ (considered as bivector fields).

\subsection{Genus $0$} By reduction to genus $0$, we mean the reduction to the dispersionless part of the hierarchy, that is, we set $\eps=0$.

\begin{proposition}
Conjecture~\ref{main conjecture} is true in genus~$0$.
\end{proposition}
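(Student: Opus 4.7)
The plan is to reduce the genus zero statement to Dubrovin's classical bihamiltonian structure on the loop space of a Dubrovin--Frobenius manifold. Setting $\eps=0$ throughout, formula \eqref{eq:og up to genus 1} gives $\og|_{\eps=0}=\int F(u)\,dx$, and inspection of \eqref{DR Hamiltonians} in genus zero, where $\DR_0(a_1,\ldots,a_n)=1$ and the integrand is independent of the $a_i$, shows that the Hamiltonians $\og_{\alpha,d}|_{\eps=0}$ coincide with the primary Hamiltonians of the principal hierarchy of the Frobenius manifold attached to the CohFT.

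The key computation is to identify $K_2|_{\eps=0}$ explicitly. Since $F_\mu(u)=\delta\og|_{\eps=0}/\delta u^\mu$ involves no $x$-derivatives of $u$, one has $L^k_\nu(F_\mu)=0$ for every $k\ge 1$, so $\hOmega^k(\og)|_{\eps=0}=0$ for $k\ge 1$, while $\hOmega(\og)^{\alpha\beta}|_{\eps=0}$ reduces to the matrix of functions $W^{\alpha\beta}(u)\coloneqq\eta^{\alpha\mu}\eta^{\beta\nu}F_{\mu\nu}(u)$. In particular $K_2|_{\eps=0}$ is a first-order Dubrovin--Novikov operator. Plugging these simplifications into the alternative expression \eqref{eq:second expression for K2} and applying the Euler-identity computation from the proof of the lemma above (now to the $W^{\alpha\beta}$ themselves, and using $\mu\eta+\eta\mu=0$), the leading symbol of $K_2|_{\eps=0}$ becomes
\[
g^{\alpha\beta}(u) = (1-\mu_\alpha-\mu_\beta)W^{\alpha\beta}(u) + (\eta^{-1}A\eta^{-1})^{\alpha\beta} = E^\gamma\,\eta^{\alpha\mu}\eta^{\beta\nu}F_{\gamma\mu\nu}(u),
\]
which is precisely the intersection form of the Dubrovin--Frobenius manifold. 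A parallel calculation shows that the lower-order coefficient $(\tfrac12-\mu_\beta)\,W^{\alpha\beta}_\gamma\,u^\gamma_x$ reproduces the Levi-Civita connection of the intersection form in Dubrovin's normalization.

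With $K_2|_{\eps=0}$ identified as Dubrovin's second Poisson structure on the loop space of the Frobenius manifold, part~(1) of Conjecture~\ref{main conjecture} at $\eps=0$ follows from Dubrovin's classical theorems (this bracket is Poisson and forms a Poisson pencil with $\eta^{-1}\partial_x$), and part~(2) from the classical fact that this pencil generates the principal hierarchy by bihamiltonian recursion starting from the Casimirs $\og_{\alpha,-1}$. The main obstacle is bookkeeping: verifying that the recursion constants come out to be exactly $d+\tfrac 32+\mu_\alpha$ and $A^\beta_\alpha$ as stated in \eqref{eq:DR biham}. I would handle this by applying $K_2$ directly to $\delta\og_{\alpha,d}|_{\eps=0}/\delta u^*$, expanding in the $u$-variables, and using the quasi-homogeneity of $F$ together with the string and dilaton equations for the genus zero descendant integrals defining $\og_{\alpha,d+1}$ to match both sides of \eqref{eq:DR biham} coefficient by coefficient.
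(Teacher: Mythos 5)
Your proposal follows essentially the same route as the paper: set $\eps=0$, observe that $\hOmega^1(\og)$ dies and that $K_2|_{\eps=0}$ is the first-order operator whose leading symbol is the intersection form $g^{\alpha\beta}=\eta^{\alpha\mu}\eta^{\beta\nu}E^\gamma F_{\gamma\mu\nu}$ (your Euler-identity computation of $(1-\mu_\alpha-\mu_\beta)W^{\alpha\beta}+(\eta^{-1}A\eta^{-1})^{\alpha\beta}$ is exactly how the paper's preceding lemma works), cite the classical Dubrovin pencil for part~(1), and verify the recursion constants in \eqref{eq:DR biham} by a direct computation with the genus-zero descendant potential for part~(2). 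You are also right to treat the constants as the real content rather than quoting the classical recursion wholesale: the DR calibration is the ``topological'' one, not the quasi-homogeneous one of \cite{DZ98}, which is precisely why the extra term $A^\beta_\alpha\{\cdot,\og_{\beta,d}\}_{K_1}$ appears.

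One substantive correction to your plan for part~(2): the toolkit you list (string equation, dilaton equation, quasi-homogeneity) does not close the computation. The step that converts $\Omega^{\beta\gamma}\,\Omega_{\gamma,0;\alpha,d}$ and $g^{\beta\gamma}\,\Omega_{\gamma,0;\alpha,d}$ into shifted two-point functions $\Omega_{*,0;\alpha,d+1}$ is the genus-zero topological recursion relation \eqref{eq:TRR-0}; without it the coefficient matching cannot be carried out, while the dilaton equation plays no role. With TRR added, the string equation gives $\partial\Omega_{\un,0;\alpha,d+1}/\partial u^\gamma=\Omega_{\gamma,0;\alpha,d}$ and the quasi-homogeneity of $\mcF_0$ supplies exactly the constants $d+\tfrac32+\mu_\alpha$ and $A^\beta_\alpha$, which is the paper's argument.
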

\begin{proof}
Recall that $F=F(t^1,\ldots,t^N)$ denotes the Dubrovin--Frobenius manifold potential associated to a {CohFT} $\{c_{g,n}\}$. We have
$$
\og|_{\eps=0}=\int F(u^1,\ldots,u^N)dx.
$$
Therefore,
$$
K_2^{[0],\alpha\beta}\coloneqq \left.K_2^{\alpha\beta}\right|_{\eps=0}=g^{\alpha\beta}\d_x+\d_x\Omega^{\alpha\beta}\left(\frac{1}{2}-\mu_\beta\right),
$$
where
\begin{gather}
g^{\alpha\beta}\coloneqq\left.\eta^{\alpha\mu}\eta^{\beta\nu}E^\gamma\frac{\d^3 F}{\d t^\gamma\d t^\mu\d t^\nu}\right|_{t^\theta=u^\theta}
\qquad\text{and}\qquad \Omega^{\alpha\beta}\coloneqq\left.\eta^{\alpha\mu}\eta^{\beta\nu}\frac{\d^2 F}{\d t^\mu\d t^\nu}\right|_{t^\theta=u^\theta}.\label{eq:galphabeta}
\end{gather}
The fact that the operators $\eta^{-1}\d_x$ and $K_2^{[0]}$ form a pair of compatible Poisson operators is well known (see e.g.~\cite[page~443]{DZ99}). This proves the first part of the conjecture in genus $0$. 

Now we check the second part of the conjecture. Let $t^\alpha_a$, $1\le\alpha\le N$, $a\ge 0$, be formal variables, where we identify $t^\alpha_0=t^\alpha$. Consider the genus $0$ potential of $\{c_{g,n}\}$:
$$
\mcF_0(t^*_*)\coloneqq\sum_{n\geq 3}\frac{1}{n!}\sum_{\substack{1\leq\alpha_1,\ldots,\alpha_n\leq N\\d_1,\ldots,d_n\ge 0}}\left(\int_{\oM_{0,n}}c_{0,n}(\otimes_{i=1}^n e_{\alpha_i})\prod_{i=1}^n\psi_i^{d_i}\right)\prod_{i=1}^n t^{\alpha_i}_{d_i}\in\mbC[[t^*_*]].
$$
It satisfies the string equation
\begin{gather}\label{eq:string equation for mcF0}
\frac{\d\mcF_0}{\d t^\un_0}=\sum_{n\ge 0}t^\alpha_{n+1}\frac{\d\mcF_0}{\d t^\alpha_n}+\frac{1}{2}\eta_{\alpha\beta}t^\alpha_0 t^\beta_0,
\end{gather}
where we use the notation $\frac{\d}{\d t^\un_0}\coloneqq A^\alpha\frac{\d}{\d t^\alpha_0}$, and the topological recursion relations
\begin{gather}\label{eq:TRR-0}
\frac{\d^3\mcF_0}{\d t^\alpha_{a+1}\d t^\beta_b\d t^\gamma_c}=\frac{\d^2\mcF_0}{\d t^\alpha_a\d t^\mu_0}\eta^{\mu\nu}\frac{\d^3\mcF_0}{\d t^\nu_0\d t^\beta_b\d t^\gamma_c},\quad 1\le\alpha,\beta,\gamma\le N,\quad a,b,c\ge 0.
\end{gather}
Also, the homogeneity condition~\eqref{eq:definition of a homogeneous CohFT} implies that
\begin{gather}\label{eq:homogeneity of mcF0}
\left(\sum_{n\ge 0}(1-q_\alpha-n)t^\alpha_n\frac{\d}{\d t^\alpha_n}+r^\alpha\frac{\d}{\d t^\alpha_0}-\sum_{n\ge 0}A^\alpha_\beta t^\beta_{n+1}\frac{\d}{\d t^\alpha_n}\right)\mcF_0=(3-\delta)\mcF_0+\frac{1}{2}A_{\alpha\beta}t^\alpha_0 t^\beta_0.
\end{gather}
Denote
$$
\Omega_{\alpha,a;\beta,b}\coloneqq\left.\frac{\d^2\mcF_0}{\d t^\alpha_a\d t^\beta_b}\right|_{t^\gamma_c=\delta_{c,0}u^\gamma}.
$$
We obtain
$$
\oh_{\alpha,d}\coloneqq\og_{\alpha,d}|_{\eps=0}=\int\left(\left.\frac{\d\mcF_0}{\d t^\alpha_d}\right|_{t^\gamma_c=\delta_{c,0}u^\gamma}\right)dx\xlongequal{\!\text{Eq.~\eqref{eq:string equation for mcF0}}\!}\int\Omega_{\un,0;\alpha,d+1}dx.
$$
Note that by Equation~\eqref{eq:string equation for mcF0} we have
\begin{equation}\label{eq:Omega-String}
\frac{\d\Omega_{\un,0;\alpha,d+1}}{\d u^\gamma}=\Omega_{\gamma,0;\alpha,d},\quad d\ge -1,
\end{equation}
where we adopt the convention~$\Omega_{\gamma,0;\alpha,-1}\coloneqq\eta_{\gamma\alpha}$. We need to check that for $d\ge -1$ we have
\begin{gather}\label{eq:bihamiltonian recursion in genus 0}
\{\cdot,\oh_{\alpha,d}\}_{K_2^{[0]}}=\left(d+\frac{3}{2}+\mu_\alpha\right)\{\cdot,\oh_{\alpha,d+1}\}_{\eta^{-1}\d_x}+A^\beta_\alpha\{\cdot,\oh_{\beta,d}\}_{\eta^{-1}\d_x}.
\end{gather}

Equation~\eqref{eq:bihamiltonian recursion in genus 0} is equivalent to
\begin{align} \label{eq:EquivGenus0}
&K_2^{[0],\beta\gamma}\frac{\d\Omega_{\un,0;\alpha,d+1}}{\d u^\gamma}=\left(d+\frac{3}{2}+\mu_\alpha\right)\eta^{\beta\gamma}\d_x\frac{\d\Omega_{\un,0;\alpha,d+2}}{\d u^\gamma}+\eta^{\beta\gamma}\d_x\frac{\d\Omega_{\un,0;\mu,d+1}}{\d u^\gamma}A^\mu_\alpha.
\end{align}
Using Equation~\eqref{eq:Omega-String} and the skew-symmetry of the operator $K_2^{[0]}$ we have
\[
K_2^{[0],\beta\gamma}\frac{\d\Omega_{\un,0;\alpha,d+1}}{\d u^\gamma} = K_2^{[0],\beta\gamma}\Omega_{\gamma,0;\alpha,d}
= \left(\d_x\circ g^{\beta\gamma}+\left(\mu_\beta-\frac{1}{2}\right)\d_x\Omega^{\beta\gamma}\right)\Omega_{\gamma,0;\alpha,d}.
\]
By~\eqref{eq:TRR-0} we have $g^{\beta\gamma}\Omega_{\gamma,0;\alpha,d}=\eta^{\beta\gamma}((1-q_\nu)u^\nu+r^\nu)\frac{\d\Omega_{\gamma,0;\alpha,d+1}}{\d u^\nu}$ and $\d_x\Omega^{\beta\gamma}\Omega_{\gamma,0;\alpha,d}=\eta^{\beta\gamma}\d_x\Omega_{\gamma,0;\alpha,d+1}$. Recall also that $\mu\eta+\eta\mu=0$. Therefore, the left-hand side of Equation~\eqref{eq:EquivGenus0} is equal to
\[
\eta^{\beta\gamma}\d_x \bigg( ((1-q_\nu)u^\nu+r^\nu)\frac{\d\Omega_{\gamma,0;\alpha,d+1}}{\d u^\nu}  +\left(-\mu_\gamma-\frac{1}{2}\right)\d_x\Omega_{\gamma,0;\alpha,d+1}\bigg).
\]
Using Equation~\eqref{eq:Omega-String} we rewrite the right-hand side of Equation~\eqref{eq:EquivGenus0} as
\[
\eta^{\beta\gamma}\d_x \bigg(\left(d+\frac{3}{2}+\mu_\alpha\right)\Omega_{\gamma,0;\alpha,d+1}+\Omega_{\gamma,0;\mu,d}A^\mu_\alpha\bigg).
\]
Therefore, Equation~\eqref{eq:EquivGenus0} would follow from 
\begin{gather}\label{eq:proof of genus 0 conjecture,1}
((1-q_\nu)u^\nu+r^\nu)\frac{\d\Omega_{\gamma,0;\alpha,d+1}}{\d u^\nu}=\left(d+2+\mu_\alpha+\mu_\gamma\right)\Omega_{\gamma,0;\alpha,d+1}+\Omega_{\gamma,0;\mu,d}A^\mu_\alpha.
\end{gather}
Applying $\frac{\d^2}{\d t^\gamma_0\d t^\alpha_{d+1}}$ to both sides of Equation~\eqref{eq:homogeneity of mcF0} and setting $t^\nu_p=\delta_{p,0}u^\nu$ we prove Equation~\eqref{eq:proof of genus 0 conjecture,1}, hence Equation~\eqref{eq:EquivGenus0}, hence Equation~\eqref{eq:bihamiltonian recursion in genus 0}. This completes the proof of the proposition.
\end{proof}

\begin{remark}
The genus $0$ part of the DR hierarchy coincides with a principal hierarchy (a genus~$0$ hierarchy in the language of~\cite{DZ98}) associated to the Dubrovin--Frobenius manifold given by the potential $F$~\cite[Section~4.2.2]{Bur15}. The family of principal hierarchies associated to a Dubrovin--Frobenius manifold is parameterized by the choice of deformed flat coordinates $\theta^\alpha(t^*,z)$ for a certain flat connection on a formal neighbourhood of $0\in\mbC^N$ (see e.g.~\cite[Equation~(2.38)]{DZ98}). In the literature the deformed flat coordinates are often chosen in such a way that they satisfy an additional quasihomogeneity constraint (see e.g.~\cite[Equation~(2.74)]{DZ98}). An analogue of relation~\eqref{eq:bihamiltonian recursion in genus 0} for the corresponding principal hierarchy is well known~\cite[Proposition~2 and Equation~(2.75)]{DZ98}. The genus $0$ part of the DR hierarchy corresponds to a unique choice of the deformed flat coordinates $\theta^\alpha(t^*,z)=t^\alpha+\sum_{p\ge 1}\theta^\alpha_p(t^*)z^p$ satisfying $\frac{\d\theta^\alpha_p}{\d t^\beta}\big|_{t^*=0}=0$. 
\end{remark}


\subsection{The first equation of the bihamiltonian recursion}

\begin{proposition}
Equation~\eqref{eq:DR biham} is true for $d=-1$.
\end{proposition}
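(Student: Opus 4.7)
The plan is to evaluate $K_2^{\mu\nu}\bigl(\tfrac{\delta\og_{\alpha,-1}}{\delta u^\nu}\bigr)$ directly and to match it with the right-hand side of~\eqref{eq:DR biham} for $d=-1$. Since $\og_{\alpha,-1}=\int\eta_{\alpha\beta}u^\beta\,dx$, its variational derivative is the constant $\tfrac{\delta\og_{\alpha,-1}}{\delta u^\nu}=\eta_{\alpha\nu}$. Consequently $K_1^{\mu\nu}(\eta_{\alpha\nu})=\eta^{\mu\nu}\d_x\eta_{\alpha\nu}=0$, so the term $A^\beta_\alpha\{\cdot,\og_{\beta,-1}\}_{K_1}$ on the right-hand side vanishes, and the task reduces to showing $K_2^{\mu\nu}(\eta_{\alpha\nu})=(\tfrac12+\mu_\alpha)\eta^{\mu\nu}\d_x\tfrac{\delta\og_{\alpha,0}}{\delta u^\nu}$.

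To compute the left-hand side I will use the alternative expression~\eqref{eq:second expression for K2} for $K_2$. Three of its four summands, namely $(\tfrac{1}{2}-\mu)\circ\hOmega(\og)\circ\d_x$, $\eta^{-1}A\eta^{-1}\d_x$, and $\d_x\circ\hOmega^1(\og)\circ\d_x$, all end with $\d_x$ on the right and therefore annihilate the constant $\eta_{\alpha\nu}$; only the first summand $\d_x\circ\hOmega(\og)\circ(\tfrac{1}{2}-\mu)$ survives. Next I will expand $\hOmega(\og)^{\mu\nu}=\eta^{\mu\rho}\eta^{\nu\sigma}L_\sigma\bigl(\tfrac{\delta\og}{\delta u^\rho}\bigr)$ and observe that only the zeroth-order piece of $L_\sigma$, namely $\tfrac{\d}{\d u^\sigma}$, contributes when applied to a constant. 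Using the identity $(\tfrac{1}{2}-\mu_\nu)\eta_{\alpha\nu}=(\tfrac{1}{2}+\mu_\alpha)\eta_{\alpha\nu}$, an immediate consequence of $\mu\eta+\eta\mu=0$, together with the contraction $\eta^{\nu\sigma}\eta_{\alpha\nu}=\delta^\sigma_\alpha$, this will yield
\begin{equation*}
K_2^{\mu\nu}(\eta_{\alpha\nu})=\Bigl(\tfrac12+\mu_\alpha\Bigr)\d_x\Bigl[\eta^{\mu\rho}\tfrac{\d}{\d u^\alpha}\tfrac{\delta\og}{\delta u^\rho}\Bigr].
\end{equation*}

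The last step will be to identify $\tfrac{\d}{\d u^\alpha}\tfrac{\delta\og}{\delta u^\rho}$ with $\tfrac{\delta\og_{\alpha,0}}{\delta u^\rho}$. Since $\tfrac{\d}{\d u^\alpha}$ commutes with $\d_x$ and with every $\tfrac{\d}{\d u^\rho_l}$ (as one checks directly from $\d_x=\sum_{i\ge0}u^\beta_{i+1}\tfrac{\d}{\d u^\beta_i}$), it commutes with the variational derivative $\tfrac{\delta}{\delta u^\rho}$ applied to any density; combined with the relation $\tfrac{\d\og}{\d u^\alpha}=\og_{\alpha,0}$ from~\eqref{eq:properties of og} this delivers the desired identification. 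Substituting back gives $K_2^{\mu\nu}(\eta_{\alpha\nu})=(\tfrac12+\mu_\alpha)K_1^{\mu\rho}\tfrac{\delta\og_{\alpha,0}}{\delta u^\rho}$, matching the right-hand side of~\eqref{eq:DR biham} at $d=-1$. There is no real obstacle: the argument reduces to separating the four terms of~\eqref{eq:second expression for K2} by the position of $\d_x$ and to the index bookkeeping governed by $\mu\eta+\eta\mu=0$.
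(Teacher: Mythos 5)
Your proof is correct and follows essentially the same route as the paper: since $\tfrac{\delta\og_{\alpha,-1}}{\delta u^\nu}=\eta_{\alpha\nu}$ is constant, all terms of $K_2$ ending in $\d_x$ drop out, the surviving term is handled via $\mu\eta+\eta\mu=0$, and the conclusion follows from $\tfrac{\d}{\d u^\alpha}\tfrac{\delta\og}{\delta u^\gamma}=\tfrac{\delta}{\delta u^\gamma}\tfrac{\d\og}{\d u^\alpha}=\tfrac{\delta\og_{\alpha,0}}{\delta u^\gamma}$ together with~\eqref{eq:properties of og}. The only cosmetic difference is that you start from the alternative expression~\eqref{eq:second expression for K2} while the paper uses the original definition of $K_2$; applied to a constant, the surviving terms $\d_x\circ\hOmega(\og)\circ(\tfrac12-\mu)$ and $\hOmega(\og)_x\circ(\tfrac12-\mu)$ coincide, so the computations are identical.
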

\begin{proof}
For $d=-1$ Equation~\eqref{eq:DR biham} is equivalent to
\begin{gather*}
K_2^{\beta\gamma}\frac{\delta}{\delta u^\gamma}\left(\int\eta_{\alpha\nu}u^\nu dx\right)=\left(\mu_\alpha+\frac{1}{2}\right)\eta^{\beta\gamma}\d_x\frac{\delta\og_{\alpha,0}}{\delta u^\gamma}+A_\alpha^\nu\eta^{\beta\gamma}\d_x\frac{\delta}{\delta u^\gamma}\left(\int\eta_{\nu\theta}u^\theta dx\right).
\end{gather*}
The variational derivative of $\int u^* dx$ is constant, therefore, expanding the definition of $K_2$ we see that the latter equation is equivalent to
\begin{gather*}
\hOmega(\og)^{\beta\gamma}_x\left(\left(\frac{1}{2}-\mu_\gamma\right)\eta_{\gamma\alpha}\right)=\left(\mu_\alpha+\frac{1}{2}\right)\eta^{\beta\gamma}\d_x\frac{\delta\og_{\alpha,0}}{\delta u^\gamma}.
\end{gather*}
Expanding the definition of $\hOmega$ and using the skew-symmetry of $\mu$ with respect to $\eta$ we see that the latter equation is equivalent to
\begin{align*}
\left(\mu_\alpha+\frac{1}{2}\right)\eta^{\beta\gamma}\d_x\frac{\d}{\d u^\alpha}\left(\frac{\delta\og}{\delta u^\gamma}\right)=\left(\mu_\alpha+\frac{1}{2}\right)\eta^{\beta\gamma}\d_x\frac{\delta\og_{\alpha,0}}{\delta u^\gamma}.
\end{align*}
This last equation holds, since 
\[
\frac{\d}{\d u^\alpha}\left(\frac{\delta\og}{\delta u^\gamma}\right)=\frac{\delta}{\delta u^\gamma}\left(\frac{\d\og}{\d u^\alpha}\right)\xlongequal{\!\text{Eq.~\eqref{eq:properties of og}}\!}\frac{\delta\og_{\alpha,0}}{\delta u^\gamma}.
\] 
\end{proof}


\subsection{The Schouten--Nijenhuis bracket of two skew-symmetric operators}\label{subsection:schouten bracket}

In this section we prove that the Schouten--Nijenhuis bracket of the brackets $\{\cdot,\cdot\}_{K_1}$ and $\{\cdot,\cdot\}_{K_2}$ considered as bivector fields vanishes. In order to formulate and prove this result, we recall the necessary background regarding the Schouten--Nijenhuis bracket following~\cite{LZ13}. 

\subsubsection{Local multivector fields}

The space of {\it densities of local multivector fields} (on the formal loop space of $V$) is the associative graded commutative algebra
\[
\hcA\coloneqq\mbC[[u^*]][u^*_{\ge 1},\theta_{*,*}][[\eps]],
\]
where the range of the indexes of the new formal variables $\theta_{\alpha,k}$ is $1\le\alpha\le N$, $k\ge 0$. The algebra $\hcA$ is graded commutative with respect to the {\it super gradation}, denoted by $\deg_\theta$, which is defined by $\deg_\theta\theta_{\alpha,k}\coloneqq1$ and $\deg_\theta u^\alpha_k=\deg_\theta\eps\coloneqq0$. Note that the super degree $0$ homogeneous component coincides with the space of differential polynomials $\hcA^0$. The super degree $p$ homogeneous component of~$\hcA$ is denoted by $\hcA^p$

The \emph{standard gradation} is extended from $\hcA^0\subset\hcA$ to $\hcA$ by assigning $\deg\theta_{\alpha,k}\coloneqq k$. The homogeneous component of $\hcA$ of standard degree $d$ is denoted by $\hcA_d$, and we also denote
$$
\hcA^p_d\coloneqq\hcA_d\cap\hcA^p.
$$

The operator $\d_x$ is extended from $\hcA^0$ to $\hcA$ by
$$
\d_x\coloneqq\sum_{k\ge 0}u^\alpha_{k+1}\frac{\d}{\d u^\alpha_k}+\sum_{k\ge 0}\theta_{\alpha,k+1}\frac{\d}{\d\theta_{\alpha,k}}.
$$
It increases the standard degree by $1$ and preserves the super degree. Therefore, the space
$$
\hLambda\coloneqq\hcA\Big/(\mbC[[\eps]]\oplus\Im\,\d_x),
$$
called the space of {\it local multivector fields}, still possesses both gradations. The corresponding homogeneous components are denoted by $\hLambda^p_d$. If $f\in\hcA$, its image in $\hLambda$ is denoted by $\of=\int f dx$.

The \emph{variational derivatives} $\frac{\delta}{\delta u^\alpha}$ and $\frac{\delta}{\delta\theta_\alpha}$ on $\hcA$ are defined by
$$
\frac{\delta}{\delta u^\alpha}\coloneqq\sum_{i\ge 0}(-\d_x)^i\circ\frac{\d}{\d u^\alpha_i}\qquad \text{and} \qquad \frac{\delta}{\delta \theta_\alpha}\coloneqq\sum_{i\ge 0}(-\d_x)^i\circ\frac{\d}{\d \theta_{\alpha,i}},
$$ 
and, since they vanish on $\mbC[[\eps]]\oplus\Im\,\d_x$, they restrict to $\hLambda$: 
\[
\frac{\delta}{\delta u^\alpha},\frac{\delta}{\delta\theta_\alpha}\colon\hLambda\to\hcA.
\] 

There is a bijection between the space of $N$-tuples $\oQ=(Q^1,\ldots,Q^N)$, $Q^\alpha\in\hcA^0$, and the space $\hLambda^1$ (the space of \emph{local vector fields}) given by
$$
\oQ\mapsto V_\oQ\coloneqq\int Q^\alpha\theta_\alpha dx\in\hLambda^1.
$$
There is also a bijection between the space of skew-symmetric operators $K=(K^{\alpha\beta})$, $K^{\alpha\beta}=\sum_{s\ge 0}K^{\alpha\beta}_s\d_x^s$, $K^\dagger=-K$, and the space $\hLambda^2$ (the space of \emph{local bivector fields}) given by
$$
K\mapsto B_K\coloneqq\frac{1}{2}\int\sum_{s\ge 0}K^{\alpha\beta}_s\theta_{\alpha,0}\theta_{\beta,s}dx\in\hLambda^2.
$$ 

\begin{definition}
The {\it Schouten--Nijenhuis bracket}
$
[\cdot,\cdot]\colon\hLambda^p\times\hLambda^q\to\hLambda^{p+q-1}
$
is defined by
$$
[P,Q]\coloneqq\int\left(\frac{\delta P}{\delta\theta_\alpha}\frac{\delta Q}{\delta u^\alpha}+(-1)^p\frac{\delta P}{\delta u^\alpha}\frac{\delta Q}{\delta \theta_\alpha}\right)dx.
$$	
\end{definition}

\begin{lemma}[\cite{LZ11}] For any $P\in\hLambda^p$, $Q\in\hLambda^q$, $R\in\hLambda^r$ 
the Schouten--Nijenhuis bracket satisfies the properties
\begin{align*}
&[P,Q]=(-1)^{pq}[Q,P],\\
&(-1)^{pr}[[P,Q],R]+(-1)^{rq}[[R,P],Q]+(-1)^{qp}[[Q,R],P]=0.
\end{align*}
\end{lemma}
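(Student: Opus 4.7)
The plan is to verify both identities by direct computation from the defining formula, working with representative densities in $\hcA$ and passing to $\hLambda$ at the end. The only technical tool needed is a graded integration-by-parts rule: for $f \in \hcA$ and any of the generators $v \in \{u^\alpha_i, \theta_{\alpha,i}\}$, integrating by parts inside $\int (\cdots)\,dx$ allows one to move powers of $\d_x$ from one factor to another, which translates into the identities
$$
\int f\,\frac{\delta g}{\delta u^\alpha}\,dx = \int \left(\sum_{i\ge 0}(-1)^i \d_x^i f\right)\frac{\d g}{\d u^\alpha_i}\,dx,
$$
modulo $\Im\,\d_x$, and analogously for $\delta/\delta\theta_\alpha$. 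Sign twists arise whenever one commutes $\d_x$ past a $\theta_{\alpha,k}$. Once this rule is in place, both identities reduce to a careful bookkeeping of signs.

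For the graded anticommutativity, I expand $[P,Q]$ and $[Q,P]$ as sums of two terms each. Using the integration-by-parts rule on each term to transfer the variational derivatives from $P$ onto $Q$ (or vice versa), and commuting the $\theta$-derivatives past the $u$-derivatives, one recovers exactly the defining expression for the opposite bracket, multiplied by $(-1)^{pq}$. The sign arises from two sources: the $(-1)^p$ already present in the definition (which combines with the analogous $(-1)^q$) and the parity produced by moving a degree-$1$ $\theta$-derivative through super-homogeneous factors.

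The main obstacle is the graded Jacobi identity. Expanding $[[P,Q],R]$ via the definition and using the Leibniz rule for $\delta/\delta u^\alpha$ and $\delta/\delta\theta_\alpha$ applied to the product appearing inside the inner integrand produces terms of three types: the two variational derivatives of the outer bracket either both hit the same inner factor or hit different factors. The same holds for the two cyclic permutations. The plan is to collect all contributions from the alternating sum $(-1)^{pr}[[P,Q],R] + (-1)^{rq}[[R,P],Q] + (-1)^{qp}[[Q,R],P]$ and show that the ``same-factor'' terms cancel against themselves through the graded Jacobi identity of ordinary partial derivatives, while the ``different-factor'' terms cancel pairwise across the three cyclic summands once the signs are correctly tracked. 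This is a mechanical but extensive calculation; the main difficulty is keeping the many super signs and the integration-by-parts correction terms under control. A more conceptual alternative is to realize the Schouten--Nijenhuis bracket as the derived bracket of an odd variational Poisson structure on $\hcA$ pairing $\theta_{\alpha,k}$ with $u^\alpha_k$, from which both identities follow as general properties of derived brackets on graded Lie superalgebras; this trades a lengthy computation for a formal construction but does not change the core content of the proof.
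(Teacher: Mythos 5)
The paper does not prove this lemma at all --- it is quoted from \cite{LZ11} as a known fact --- so the only question is whether your outline would actually go through. For the graded antisymmetry it would, but note that you are overcomplicating it: no integration by parts is needed. Writing $[P,Q]=\int\bigl(\tfrac{\delta P}{\delta\theta_\alpha}\tfrac{\delta Q}{\delta u^\alpha}+(-1)^p\tfrac{\delta P}{\delta u^\alpha}\tfrac{\delta Q}{\delta\theta_\alpha}\bigr)dx$ and using only that $\hcA$ is graded commutative with $\deg_\theta\tfrac{\delta P}{\delta\theta_\alpha}=p-1$ and $\deg_\theta\tfrac{\delta P}{\delta u^\alpha}=p$, the two terms of $[P,Q]$ transform term-by-term into the two terms of $(-1)^{pq}[Q,P]$ (the check is $(p-1)q\equiv pq+q$ and $p+p(q-1)\equiv pq$ mod $2$). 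Introducing integration by parts here only creates opportunities for sign errors.

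The genuine gap is in your treatment of the Jacobi identity. You propose to ``use the Leibniz rule for $\delta/\delta u^\alpha$ and $\delta/\delta\theta_\alpha$ applied to the product appearing inside the inner integrand,'' but the variational derivative is \emph{not} a derivation: $\tfrac{\delta}{\delta u^\alpha}(fg)\neq\tfrac{\delta f}{\delta u^\alpha}g+f\tfrac{\delta g}{\delta u^\alpha}$, and there is no naive product rule for $\tfrac{\delta}{\delta u^\alpha}$ of an integrated expression such as $[P,Q]$. The step that actually carries the weight of the proof is the commutation formula between partial and variational derivatives,
\[
\frac{\d}{\d u^{\mu}_{t}} \frac{\delta}{\delta u^\nu} = (-1)^t \sum_{l\ge 0} \binom{l+t}{t} (-\d_x)^l \frac{\d}{\d u^{\nu}_{t+l}}\frac{\delta}{\delta u^\mu}
\]
(together with its $\theta$-analogues); this is exactly Lemma~2.1.5(i) of \cite{LZ11}, which the present paper itself invokes in the proof of Lemma~\ref{lemma:property of hOmegak}. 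Without it, the ``same-factor'' and ``different-factor'' terms you describe cannot even be written down correctly, let alone shown to cancel. Your fallback --- realizing $[\cdot,\cdot]$ as a derived bracket of an odd symplectic structure pairing $\theta_{\alpha,k}$ with $u^\alpha_k$ --- is indeed the standard way to make the argument clean (and is essentially how the identity is established in the literature), but as stated it is an appeal to a formal framework you have not set up, not a proof. Either route is fine; in both cases the missing ingredient you must supply explicitly is the interchange formula above.
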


A skew-symmetric operator $K$ is \emph{Poisson} if and only if $[B_K,B_K]=0$. For a Poisson operator~$K$ and any $R\in\hLambda$ we have
\begin{gather}\label{eq:double commutator}
[[R,B_K],B_K]=0.
\end{gather}

For a skew-symmetric operator $K$, an $N$-tuple $\oQ\in(\hcA^0)^N$, and a local functional $\of\in\hLambda^0$ we have the following properties:
\begin{align}
&\frac{\delta B_K}{\delta\theta_\alpha}=\sum_{s\ge 0}K^{\alpha\beta}_s\theta_{\beta,s}\in\hcA^1;\notag\\
&[\of,B_K]=-V_{\oP},\quad P^\alpha=\sum_{s\ge 0}K^{\alpha\beta}_s\d_x^s\frac{\delta\of}{\delta u^\beta};\notag\\
\label{eq:CommutatorVQ-BK}
&[V_\oQ,B_K]=-B_{\tK},\quad\tK^{\alpha\beta}=L_\mu(Q^\alpha)\circ K^{\mu\beta}+K^{\alpha\nu}\circ L_\nu^\dagger(Q^\beta)-\sum_{p,s\ge 0}(\d_x^p Q^\gamma)\frac{\d K^{\alpha\beta}_s}{\d u^\gamma_p}\d_x^s.
\end{align}
Note that $\{\of,\og\}_{K} = [[B_K,\of],\og]$ for any $\of,\og\in\hLambda^0$.

\subsubsection{The Schouten--Nijenhuis bracket of ${K_1}$ and ${K_2}$}

The fact that the skew-symmetric operator $K_2$ is Poisson and is compatible with the operator $K_1$ is equivalent to the system of the following two equations:
$$
[B_{K_2},B_{K_2}]=0\qquad \text{and} \qquad [B_{K_2},B_{K_1}]=0.
$$

\begin{proposition}\label{proposition:Schouten bracket of K1 and K2}
We have $[B_{K_2},B_{K_1}]=0$.
\end{proposition}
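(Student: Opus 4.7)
The plan is to realize $B_{K_2}$, up to a manifestly compatible constant-coefficient piece, as the Lie derivative of $B_{K_1}$ along an explicit vector field $V$, so that the claim reduces to the general double-bracket identity~\eqref{eq:double commutator}: $[[V, B_{K_1}], B_{K_1}]=0$. Since $K_1=\eta^{-1}\partial_x$ has constant coefficients, $\delta B_{K_1}/\delta u^\alpha=0$ and the Schouten--Nijenhuis bracket collapses to
\[
[B_{K_2},B_{K_1}]\;=\;\int\frac{\delta B_{K_2}}{\delta u^\alpha}\,\eta^{\alpha\beta}\,\theta_{\beta,1}\,dx.
\]
Using the alternative expression~\eqref{eq:second expression for K2}, I would split $K_2=K_2^{(\mathrm{const})}+K_2^{(\mathrm{a})}+K_2^{(\mathrm{b})}+K_2^{(\mathrm{c})}$, where $K_2^{(\mathrm{const})}=\eta^{-1}A\eta^{-1}\d_x$ has constant coefficients and contributes nothing, $K_2^{(\mathrm{a})}=M\circ\hOmega(\og)\circ\d_x$, $K_2^{(\mathrm{b})}=\d_x\circ\hOmega(\og)\circ M$, and $K_2^{(\mathrm{c})}=\d_x\circ\hOmega^1(\og)\circ\d_x$, with $M\coloneqq\tfrac12-\mu$.

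For pieces (a) and (b), I would take the vector field $V_\oQ$ with
\[
Q^\alpha\;\coloneqq\;\left(\tfrac12-\mu_\alpha\right)\eta^{\alpha\mu}\,\frac{\delta\og}{\delta u^\mu}\qquad\text{(no sum on }\alpha\text{).}
\]
Applying formula~\eqref{eq:CommutatorVQ-BK} to $K=K_1$ (its last term drops because $K_1$ has constant coefficients) and invoking Lemma~\ref{lemma:property of hOmegak} to identify $L_\nu(\delta\og/\delta u^\mu)^\dagger=L_\mu(\delta\og/\delta u^\nu)$, a direct calculation should yield
\[
-[V_\oQ,B_{K_1}]\;=\;B_{K_2^{(\mathrm{a})}+K_2^{(\mathrm{b})}}.
\]
For piece (c), I would exploit the operator identity
\[
L_\alpha(f)\;=\;L_\alpha^1(\d_x f)\;-\;\d_x\circ L_\alpha^1(f),
\]
a direct consequence of the $k=1$ case of the recursion $L_\alpha^k(\d_x f)=\d_x\circ L_\alpha^k(f)+L_\alpha^{k-1}(f)$ stated in Section~\ref{subsubsection:formal loop space}. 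Applied to $f=\delta\og/\delta u^\mu$, this identity relates $\hOmega(\og)$ and $\d_x\circ\hOmega^1(\og)$ and should let me construct a second vector field $V_\oR$ with $-[V_\oR,B_{K_1}]=B_{K_2^{(\mathrm{c})}}$. Putting $V\coloneqq V_\oQ+V_\oR$ then gives $B_{K_2}-B_{K_2^{(\mathrm{const})}}=-[V,B_{K_1}]$, and the proposition follows by~\eqref{eq:double commutator}.

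The hard part will be the explicit construction of $V_\oR$: I need to produce an $N$-tuple $(R^1,\ldots,R^N)$ of differential polynomials, involving $L^1$-type derivatives of $\og$, whose Schouten bracket with $B_{K_1}$ reproduces precisely the higher-order skew-symmetric operator $\d_x\circ\hOmega^1(\og)\circ\d_x$ with no residual terms. This matching requires combining the operator identity above with the adjoint properties of $L^k$ and careful integration by parts in the super-commutative ring $\hcA$, and it is where essentially all of the computation sits. Once the matching is achieved, the remaining steps are a formal consequence of the Schouten--Nijenhuis calculus.
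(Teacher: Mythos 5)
Your overall strategy coincides with the paper's: represent $B_{K_2}$ (up to harmless terms) as $[V,B_{K_1}]$ for an explicit local vector field $V$ and conclude from $[[V,B_{K_1}],B_{K_1}]=0$. Your treatment of the constant piece $\eta^{-1}A\eta^{-1}\d_x$ (it commutes with any constant-coefficient bivector because all $u$-variational derivatives vanish) and of the two terms $\d_x\circ\hOmega(\og)\circ(\tfrac12-\mu)+(\tfrac12-\mu)\circ\hOmega(\og)\circ\d_x$ via the vector field with components $Q^\alpha=(\tfrac12-\mu_\alpha)\eta^{\alpha\mu}\frac{\delta\og}{\delta u^\mu}$ is correct and, after using $\mu\eta+\eta\mu=0$, agrees with the paper's choice $R_1^\alpha=-\eta^{\alpha\beta}\bigl(\tfrac12+\mu_\beta\bigr)\frac{\delta\og}{\delta u^\beta}$ (the paper also realizes the constant piece as a Lie derivative, via $R^\alpha_2=-\tfrac12 A^\alpha_\gamma u^\gamma$; your shortcut is equally valid).

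However, there is a genuine gap exactly where you flag it: you never produce the vector field realizing $B_{\d_x\circ\hOmega^1(\og)\circ\d_x}$ as a bracket with $B_{K_1}$, and the identity $L_\alpha(f)=L^1_\alpha(\d_x f)-\d_x\circ L^1_\alpha(f)$ does not by itself supply a candidate: the required $N$-tuple cannot be written in terms of $L^1$ applied to the variational derivative $\frac{\delta\og}{\delta u^\mu}$. The paper's solution is $R^\alpha_3=\eta^{\alpha\beta}\d_x\cT_{\beta,1}(g)$, where $\cT_{\beta,1}=\sum_{n\ge1}n(-\d_x)^{n-1}\circ\frac{\d}{\d u^\beta_n}$ is the higher Euler operator applied to a chosen \emph{density} $g\in\hcA^0$ of $\og$, not to $\og$ itself. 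One must then check two things: that the resulting bracket is independent of the choice of density (a change $g\mapsto g+\d_x h$ alters the vector field by $-[\oh,B_{K_1}]$, which is harmless), and, via a nontrivial double-sum combinatorial identity, that
\[
L_\nu\left(-\eta^{\alpha\mu}\d_x\cT_{\mu,1}(g)\right)\circ\eta^{\nu\beta}\d_x+\eta^{\alpha\mu}\d_x\circ L^\dagger_\mu\left(-\eta^{\beta\nu}\d_x\cT_{\nu,1}(g)\right)
\]
equals $\left(\d_x\circ\hOmega^1(\og)\circ\d_x\right)^{\alpha\beta}$. This is precisely the computation you defer, and it is where the substance of the proof lies; until it is carried out the proposition is not established.
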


\begin{proof} 
Equation~\eqref{eq:double commutator} implies that it is sufficient to represent $B_{K_2}$ as $[V_{\oR},B_{K_1}]$ for some local vector field. We do it below in Lemma~\ref{lem:BK2-as-commutator}.
\end{proof}

Introduce the \emph{higher Euler operators} $\cT_{\alpha,k}\colon\hcA^0\to\hcA^0$, $1\le\alpha\le N$, $k\ge 0$, by
$$
\cT_{\alpha,k}\coloneqq\sum_{n\ge k}{n\choose k}(-\d_x)^{n-k}\circ\frac{\d}{\d u^\alpha_n}.
$$
Clearly, $\cT_{\alpha,0}=\frac{\delta}{\delta u^\alpha}$. The operators $\cT_{\alpha,k}$ satisfy the property $\cT_{\alpha,k+1}\circ\d_x=\cT_{\alpha,k}$, $k\ge 0$ (see \cite{KMGZ},~\cite[Section~5.1]{BPS12b},~\cite[Section~2.1]{LZ11}).

Let us choose a representative $g\in\hcA^0$ for the local functional $\og\in\hLambda^0$ and define an $N$-tuple $\oR=(R^1,\ldots,R^N)\in(\hcA^0)^N$ by
$$
R^\alpha\coloneqq\eta^{\alpha\beta}\left(\left(-\frac{1}{2}-\mu_\beta\right)\frac{\delta\og}{\delta u^\beta}-\frac{1}{2}A_{\beta\gamma}u^\gamma+\d_x\cT_{\beta,1}(g)\right).
$$
Note that if we change the representative for $\og$ by $g\mapsto g+\d_x h$, $h\in\hcA^0$, then the local vector field $V_\oR$ changes as follows: $V_\oR\mapsto V_\oR-[\oh,B_{K_1}]$. Therefore, the Schouten--Nijenhuis bracket~$[V_\oR,B_{K_1}]$ doesn't depend on the choice of a representative for $\og$.
\begin{lemma} \label{lem:BK2-as-commutator}
We have $B_{K_2}=[V_\oR,B_{K_1}]$.
\end{lemma}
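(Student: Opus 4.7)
\emph{Approach.} The plan is to compute $[V_\oR, B_{K_1}]$ via the commutator formula~\eqref{eq:CommutatorVQ-BK} and match the result with $K_2$ written in the form~\eqref{eq:second expression for K2}. Since $K_1=\eta^{-1}\d_x$ has constant coefficients, the third term in~\eqref{eq:CommutatorVQ-BK} vanishes, and the identity $B_{K_2}=[V_\oR,B_{K_1}]=-B_{\widetilde K_1}$ reduces to verifying the operator equality
\[
K_2^{\alpha\beta}\;=\;-L_\mu(R^\alpha)\eta^{\mu\beta}\d_x\;-\;\eta^{\alpha\nu}\d_x L_\nu^\dagger(R^\beta).
\]
The statement is independent of the choice of density: under $g\mapsto g+\d_x h$ only $R^\alpha_{(3)}:=\eta^{\alpha\beta}\d_x\cT_{\beta,1}(g)$ changes, by $\eta^{\alpha\beta}\d_x(\delta\oh/\delta u^\beta)$, so that $V_\oR\mapsto V_\oR-[\oh,B_{K_1}]$, and then the right-hand side is unchanged by~\eqref{eq:double commutator}. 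Hence it suffices to fix an arbitrary representative $g$ of $\og$.

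\emph{Decomposition and the easy pieces.} I split $R^\alpha=R^\alpha_{(1)}+R^\alpha_{(2)}+R^\alpha_{(3)}$ into the three summands of the definition and compute each contribution. For $R^\alpha_{(2)}=-\tfrac12\eta^{\alpha\beta}A_{\beta\gamma}u^\gamma$, the operator $L_\mu(R^\alpha_{(2)})=-\tfrac12\eta^{\alpha\beta}A_{\beta\mu}$ is a constant matrix (equal to its own $\dagger$), and using the symmetry $A_{\alpha\beta}=A_{\beta\alpha}$ the contribution is exactly $\eta^{-1}A\eta^{-1}\d_x$, matching the third term of~\eqref{eq:second expression for K2}. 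For $R^\alpha_{(1)}=\eta^{\alpha\beta}(-\tfrac12-\mu_\beta)\delta\og/\delta u^\beta$, I expand $L_\mu(\delta\og/\delta u^\beta)$ through the definition of $\hOmega(\og)$, invoke Lemma~\ref{lemma:property of hOmegak} with $k=0$ to rewrite $L_\nu^\dagger(\delta\og/\delta u^\gamma)=L_\gamma(\delta\og/\delta u^\nu)$, and use the anticommutation $\eta^{-1}\mu\eta=-\mu$ (immediate from~\eqref{eq:mu and eta}) to transport the scalar factors $(-\tfrac12-\mu_\beta)$ through $\eta$ into the factor $(\tfrac12-\mu)$ acting on the opposite side of $\hOmega(\og)$. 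The net contribution is $(\tfrac12-\mu)\hOmega(\og)\d_x+\d_x\hOmega(\og)(\tfrac12-\mu)$, i.e., the first two terms of~\eqref{eq:second expression for K2}.

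\emph{Hard piece and main obstacle.} For $R^\alpha_{(3)}=\eta^{\alpha\beta}\d_x\cT_{\beta,1}(g)$ I use $L_\mu(\d_x f)=\d_x\circ L_\mu(f)$ (since $L_\mu^{-1}\equiv 0$) together with its conjugate $L_\nu^\dagger(\d_x f)=-L_\nu^\dagger(f)\circ\d_x$ to pull an outer $\d_x$ out on each side. After this reduction, the required match with the fourth term $\d_x\hOmega^1(\og)\d_x$ of~\eqref{eq:second expression for K2} is equivalent to the pointwise operator identity
\[
L_\mu^\dagger\bigl(\cT_{\gamma,1}(g)\bigr)\;-\;L_\gamma\bigl(\cT_{\mu,1}(g)\bigr)\;=\;L_\gamma^{1}\!\left(\frac{\delta\og}{\delta u^\mu}\right).
\]
This is the main obstacle. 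Both sides must be expanded via the definitions of $\cT_{\gamma,1}$ and of the variational derivative, and the resulting double sums rearranged through a binomial identity in the spirit of (and slightly more intricate than) the one appearing in the proof of Lemma~\ref{lemma:property of hOmegak}. Once this identity is settled, contracting with $\eta^{\alpha\mu}\eta^{\beta\gamma}$ and reinstating the outer $\d_x$'s produces $\d_x\hOmega^1(\og)\d_x$, and summing the three contributions from $R_{(1)}$, $R_{(2)}$, $R_{(3)}$ reconstructs $K_2$ exactly in the form~\eqref{eq:second expression for K2}, finishing the proof.
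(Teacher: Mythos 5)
Your proposal is correct and follows essentially the same route as the paper: the same three-term splitting of $\oR$ matched term-by-term against the decomposition of $K_2$ given by~\eqref{eq:second expression for K2}, with the $R_{(1)}$ piece handled via Lemma~\ref{lemma:property of hOmegak} and $\mu\eta+\eta\mu=0$ and the $R_{(2)}$ piece by the symmetry of $A$, exactly as in the paper. The only place you stop short is the operator identity $L_\mu^\dagger(\cT_{\gamma,1}(g))-L_\gamma(\cT_{\mu,1}(g))=L_\gamma^1(\delta\og/\delta u^\mu)$, which you correctly isolate as equivalent to matching the $R_{(3)}$ contribution with $\d_x\circ\hOmega^1(\og)\circ\d_x$ but leave unverified; this is precisely the computation the paper carries out for $K_2^{(3);\alpha\beta}$ (culminating in the combinatorial identity $\sum_{i+k=n,\,k+l=p}(i+1)(-1)^{l+1}\binom{m+k+l}{k,l}=-(n+1)\delta_{p,0}+(m+1)\delta_{p,1}$), so the identity is true and the expansion you sketch is the right way to prove it.
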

\begin{proof} Split $R^\alpha$ into the sum $R^\alpha=R^\alpha_1+R^\alpha_2+R^\alpha_3$, where
\[
R^\alpha_1 \coloneqq -\eta^{\alpha\beta}\left(\frac{1}{2}+\mu_\beta\right)\frac{\delta\og}{\delta u^\beta}, \qquad
R^\alpha_2 \coloneqq -\frac{1}{2}A^\alpha_{\gamma}u^\gamma, \qquad \text{and}\qquad
R^\alpha_3 \coloneqq \eta^{\alpha\beta} \d_x\cT_{\beta,1}(g).
\]
Using Equation~\eqref{eq:second expression for K2} we split $K_2$ into the sum of skew-symmetric operators, $K_2=K_{2}^{(1)}+K_{2}^{(2)}+K_{2}^{(3)}$, where
\begin{align*}
K_{2}^{(1)}& \coloneqq \d_x\circ\hOmega(\og)\circ\left(\frac{1}{2}-\mu\right)+\left(\frac{1}{2}-\mu\right)\circ\hOmega(\og)\circ\d_x, \\
K_{2}^{(2)} & \coloneqq \eta^{-1}A\eta^{-1}\d_x, \\
K_{2}^{(3)} & \coloneqq \d_x\circ\hOmega^1(\og)\circ\d_x.
\end{align*}
We prove below for $i=1,2,3$ that $B_{K_{2}^{(i)}} = [V_{\oR_i}, B_{K_1}]$.

For $i=1$ we have
\begin{align*}
K_{2}^{(1);\alpha\beta} & = \left(\d_x\circ\hOmega(\og)^\dagger\circ\left(\frac{1}{2}-\mu\right)+\left(\frac{1}{2}-\mu\right)\circ\hOmega(\og)\circ\d_x\right)^{\alpha\beta} \\
& = \eta^{\alpha\nu}\d_x\circ L_\nu^\dagger\left(\eta^{\beta\gamma}\frac{\delta\og}{\delta u^\gamma}\left(\frac{1}{2}-\mu_\beta\right)\right)+L_\nu\left(\eta^{\alpha\gamma}\frac{\delta\og}{\delta u^\gamma}\left(\frac{1}{2}-\mu_\alpha\right)\right)\circ\eta^{\nu\beta}\d_x
\\
& = \eta^{\alpha\nu}\d_x\circ L_\nu^\dagger\left(\eta^{\beta\gamma}\frac{\delta\og}{\delta u^\gamma}\left(\frac{1}{2}+\mu_\gamma\right)\right)+L_\nu\left(\eta^{\alpha\gamma}\frac{\delta\og}{\delta u^\gamma}\left(\frac{1}{2}+\mu_\gamma\right)\right)\circ\eta^{\nu\beta}\d_x.
\end{align*}
Then Equation~\eqref{eq:CommutatorVQ-BK} implies that $B_{K_2^{(1)}} = [V_{\oR_1}, B_{K_1}]$.

For $i=2$ we have
$$
K_2^{(2);\alpha\beta}=\frac{1}{2}\left(A^\alpha_\nu\circ\eta^{\nu\beta}\d_x+\eta^{\alpha\nu}\d_x\circ A_\nu^\beta\right),
$$
which implies $B_{K_2^{(2)}} = [V_{\oR_2}, B_{K_1}]$.

For $i=3$ we have (cf. the computation in the proof of Lemma~\ref{lemma:property of hOmegak})
\begin{align*}
K_{2}^{(3);\alpha\beta} & =\d_x\circ \eta^{\alpha\mu}\eta^{\beta\nu} \sum_{i=0}^\infty (i+1)   \frac{\d}{\d u^{\nu}_{i+1}}\left(\frac{\delta\og}{\delta u^\mu}\right) \d_x^{i+1} \\
& = {\eta^{\alpha\mu}\eta^{\beta\nu}} \sum_{i,k,m=0}^\infty\sum_{l=0}^{i+1} (i+1) (-1)^{l+1}\binom{m+k+l}{k,l} (-\d_x)^{m+1}\circ \left(\frac{\d^2 g}{\d u^{\nu}_{i+1-l}\d u^\mu_{m+k+l}}\right) \d_x^{i+1+k} \\
&=\eta^{\alpha\mu}\eta^{\beta\nu}\sum_{m,n\ge 0}(-\d_x)^{m+1}\circ\left((m+1)\frac{\d^2g}{\d u^\mu_{m+1}\d u^\nu_n}-(n+1)\frac{\d^2g}{\d u^\mu_m\d u^\nu_{n+1}}\right)\circ\d_x^{n+1}
\end{align*}
\begin{align*}
&=\left(\sum_{m\ge 0}(m+1)(-\d_x)^{m+1}\circ L_\nu\left(\eta^{\alpha\mu}\frac{\d g}{\d u^\mu_{m+1}}\right)\right)\circ\eta^{\nu\beta}\d_x\\
& \phantom{=\ }+\eta^{\alpha\mu}\d_x\circ\left(\sum_{n\ge 0}L_\mu^\dagger\left(\eta^{\beta\nu}\frac{\d g}{\d u^\nu_{n+1}}\right)\circ(n+1)\d_x^{n+1}\right)\\
&=L_\nu\left(-\eta^{\alpha\mu}\d_x\cT_{\mu,1}(g)\right)\circ\eta^{\nu\beta}\d_x+\eta^{\alpha\mu}\d_x\circ L_\mu^\dagger\left(-\eta^{\beta\nu}\d_x\cT_{\nu,1}(g)\right),
\end{align*}
which implies $B_{K_2^{(3)}} = [V_{\oR_3}, B_{K_1}]$. In this computation we pass from the second line to the third one using the following straightforward combinatorial identity:
\begin{align*}
&  \sum_{\substack{i+k=n\\ k+l=p}}  (i+1) (-1)^{l+1}\binom{m+k+l}{k,l} \\
& \hspace{1cm}
= -(n+1) \sum_{k+l=p}   (-1)^{l}\binom{m+k+l}{k,l} + \sum_{k+l=p} k (-1)^{l}\binom{m+k+l}{k,l}
\\ 
& \hspace{1cm} = -(n+1)\delta_{p,0} + (m+1)\delta_{p,1}.
\end{align*}
\end{proof}


\section{Evidence II: central invariants}\label{section:central invariants}

In this section, assuming that the first part of Conjecture~\ref{main conjecture}, is true we compute the central invariants of the pair of operators~$(K_1,K_2)$. We also discuss the relation between Conjecture~\ref{main conjecture}, the conjecture on the equivalence of the DR hierarchy and the Dubrovin--Zhang hierarchy, and the Dubrovin--Zhang conjecture on existence of a bihamiltonian structure for the Dubrovin--Zhang hierarchy. 

\subsection{Miura transformations}

Let us discuss changes of coordinates on the formal loop space. In order to distinguish rings of differential polynomials in different variables, let ~$\cA^0_w$ denote the ring of differential polynomials in variables~$w^1,\ldots,w^N$. The same notation is adopted for the extension $\hcA^0_w$ and for the spaces of local functionals~$\Lambda^0_w$ and~$\hLambda^0_w$. 

\begin{definition}
Changes of variables of the form
\begin{gather}\label{eq:Miura transformation}
u^\alpha\mapsto\tu^\alpha(u^*_*,\eps)=\sum_{k\ge 0}\eps^k f^\alpha_k(u^*_*),\quad 1\le\alpha\le N,\quad\text{where}\quad f^\alpha_k\in\cA^0_{u;k},\quad \left.\det\left(\frac{\d f_0^\alpha}{\d u^\beta}\right)\right|_{u^*=0}\ne 0,
\end{gather}
are called {\it Miura transformations}. 
\end{definition}

We say that a Miura transformation is {\it close to identity} if $f^\alpha_0=u^\alpha$. 

Under a Miura transformation any differential polynomial $P(u^*_*,\eps)\in\hcA^0_{u}$ can be rewritten as a formal power series in $\eps$ whose coefficients are polynomials in the variables~$\tu^\alpha_i$, $i>0$, with coefficients in the ring of formal power series in the shifted variables $\tu^\alpha-\tu^\alpha_{\orig}$, where $\tu^\alpha_\orig\coloneqq\left.\tu^\alpha(u^*_*,\eps)\right|_{u^*_*=\eps=0}=\left.f^\alpha_0\right|_{u^*=0}$. We introduce the notation
$$
\hcA^{\otu_\orig;0}_{\tu}\coloneqq\mbC[[\tu^\alpha-\tu^\alpha_\orig]][\tu^\alpha_{\ge 1}][[\eps]],\quad\otu_\orig=(\tu^1_\orig,\ldots,\tu^N_\orig),
$$ 
and still call the  elements of the ring $\hcA^{\otu_\orig;0}_{\tu}$ differential polynomials. Let~$\hLambda^{\otu_\orig;0}_{\tu}$ denote the corresponding space of local functionals. A differential polynomial $P(u^*_*,\eps)\in\hcA^0_{u}$ rewritten in the variables $\tu^\alpha$ is denoted by $P(\tu^*_*,\eps)\in\hcA^{\otu_\orig;0}_{\tu}$. The condition~$f^\alpha_k\in\cA^0_{u;k}$ guarantees that if $P(u^*_*,\eps)\in\hcA^0_{u;d}$, then $P(\tu^*_*,\eps)\in\hcA^{\otu_\orig;0}_{\tu;d}$. This means that a Miura transformation defines an isomorphism $\hcA^0_{u;d}\simeq\hcA^{\otu_\orig;0}_{\tu;d}$. It also induces an isomorphism $\hLambda^0_{u;d}\simeq\hLambda^{\otu_\orig;0}_{\tu;d}$. The image of a local functional $\oh[u]\in\hLambda^0_{u;d}$ under this isomorphism is denoted by $\oh[\tu]\in\hLambda^{\otu_\orig;0}_{\tu;d}$. 

Let us describe the action of Miura transformations on Hamiltonian hierarchies. Consider a Hamiltonian hierarchy~\eqref{eq:Hamiltonian system} and a Miura transformation~\eqref{eq:Miura transformation}. Then in the new variables~$\tu^\alpha$ the system~\eqref{eq:Hamiltonian system} reads:
\begin{align*}
&\frac{\d\tu^\alpha}{\d\tau_i}=K_{\tu}^{\alpha\mu}\frac{\delta\oh_i[\tu]}{\delta \tu^\mu},\qquad
K_{\tu}^{\alpha\beta}=L_\mu(\tu^\alpha(u^*_*,\eps))\circ K^{\mu\nu}\circ L_\nu^\dagger(\tu^\beta(u^*_*,\eps)).
\end{align*}

\subsection{The central invariants of a pair of compatible Poisson operators}

Consider a pair of compatible Poisson operators $(P_1,P_2)$ of degree~$1$:
$$
P^{\alpha\beta}_a=g^{\alpha\beta}_a\d_x+\Gamma^{\alpha\beta}_{a;\gamma}u^\gamma_x+O(\eps),\quad g^{\alpha\beta}_a,\Gamma^{\alpha\beta}_{a;\gamma}\in\mbC[[u^*]],\quad a=1,2.
$$
Assume that 
$$
\left.\det(g^{\alpha\beta}_1)\right|_{u^*=0}\ne 0
$$
and that there exist formal power series $\hu^i(u^*)\in\mbC[[u^*]]$, $1\le i\le N$, such that
\begin{align}
\det\left(g^{\alpha\beta}_2-\hu^i g^{\alpha\beta}_1\right)=&0,\quad 1\le i\le N,\notag\\
\left.\det\left(\frac{\d\hu^i}{\d u^\alpha}\right)\right|_{u^*=0}\ne&0.\label{eq:nondegeneracy for hu}
\end{align}
Clearly, if such an $N$-tuple $(\hu^1,\ldots,\hu^N)$ exists, then it is unique up to permutations. Condition~\eqref{eq:nondegeneracy for hu} implies that the functions $\hu^i(u^*)$ can be used as a system of coordinates. These coordinates are called the {\it canonical coordinates} of the pair of operators $(P_1,P_2)$. 

Let us use Latin superscripts to indicate components of the operators $P_1,P_2$ in canonical coordinates, $P_a^{ij}\coloneqq P^{ij}_{a;\hu}$. The operators $P_1$ and $P_2$ have the following form in canonical coordinates (see e.g.~\cite{DLZ06}):
\begin{align}
&P^{ij}_a=\sum_{k\ge 0}\eps^k\sum_{l=0}^{k+1}P^{[k],ij}_{a;l}\d_x^l,\quad\text{where}\label{eq:form for central invariants,1}\\
&P^{[0],ij}_{1;1}=\delta_{ij}f^i,\quad P^{[0],ij}_{2;1}=\delta_{ij}\hu^i f^i,\quad f^i\in\mbC[[\hu^j-\hu^j_\orig]],\quad\hu^j_\orig=\left.\hu^j(u^*)\right|_{u^*=0}.\label{eq:form for central invariants,2}
\end{align}
The {\it central invariants} of the pair of operators $(P_1,P_2)$ are the formal power series $c_i(\hu^i)\in\mbC[[\hu^i-\hu^i_\orig]]$, $1\le i\le N$, defined by (see e.g.~\cite{DLZ06})
\begin{gather}\label{eq:definition of central invariants}
c_i(\hu^i)\coloneqq\frac{1}{3(f^i)^2}\left(P^{[2],ii}_{2;3}-\hu^i P^{[2],ii}_{1;3}+\sum_{k\ne i}\frac{\left(P^{[1],ki}_{2;2}-\hu^i P^{[1],ki}_{1;2}\right)^2}{(\hu^k-\hu^i)f^k}\right).
\end{gather}
\begin{remark}
In~\cite{DLZ06} the proof that the right-hand side of this formula is a formal power series that doesn't depend on $\hu^j$ with $j\ne i$ uses the assumption that $\hu^j_\orig\ne 0$ for all $j$ and $\hu^j_\orig\ne\hu^k_\orig$ for all $j\ne k$. Note, however, that the same argument allows to prove that $\frac{\d c_i}{\d \hu^j}=0$ for $j\ne i$ without this assumption. Also, Equation~\eqref{eq:definition of central invariants} and the fact that $f^i$ doesn't vanish at the origin imply that $c_i=\frac{g_i(\hu^1,\ldots,\hu^N)}{\prod_{j\ne i}(\hu^j-\hu^i)}$ for some $g_i(\hu^1,\ldots,\hu^N)\in\mbC[[\hu^1-\hu^1_\orig,\ldots,\hu^N-\hu^N_\orig]]$. For any fixed $j\ne i$, the fact that $\frac{\d c_i}{\d\hu^j}=0$ implies that $g_i=(\hu^j-\hu^i)\frac{\d g_i}{\d \hu^j}$. Therefore, $g_i$ is divisible by $\hu^j-\hu^i$ for any $j\ne i$ in the ring $\mbC[[\hu^1-\hu^1_\orig,\ldots,\hu^N-\hu^N_\orig]]$. Thus, the quotient $\frac{g_i(\hu^1,\ldots,\hu^N)}{\prod_{j\ne i}(\hu^j-\hu^i)}$ belongs to the ring $\mbC[[\hu^i-\hu^i_\orig]]$.
\end{remark}
The central invariants of the pair $(P_1,P_2)$ are invariant under Miura transformations of the variables $\hu^i$ close to identity. Moreover, in case $\hu^i_\orig\ne 0$ and $\hu^i_\orig\ne\hu^j_\orig$ for $i\ne j$, the central invariants classify pairs of compatible Poisson operators of the form~\eqref{eq:form for central invariants,1}--\eqref{eq:form for central invariants,2} with fixed functions $f^i$ under Miura transformations of the variables $\hu^i$ close to identity (see~\cite{LZ05,DLZ06,LZ13,CPS18,CKS18}).

\subsection{The central invariants of the pair $(K_1,K_2)$}   
Consider a homogeneous semisimple CohFT. The associated formal Dubrovin--Frobenius manifold is semisimple at the origin, and therefore it has canonical coordinates $\hu^i(t^*)\in\mbC[[t^*]]$, $1\le i\le N$ (see e.g.~\cite[Theorem~3.3]{Man99}):
$$
\frac{\d}{\d\hu^i}\circ\frac{\d}{\d\hu^j}=\delta_{ij}\frac{\d}{\d\hu^i},
$$
where by $\circ$ we denote here the multiplication in the tangent spaces of the Dubrovin--Frobenius manifold. After an appropriate shift $\hu^i\mapsto\hu^i+a^i$, $a^i\in\mbC$, the Euler vector field $E$ written in canonical coordinates is given by (see e.g.~\cite[Theorem~3.6]{Man99})
$$
E=\sum_{i=1}^N\hu^i\frac{\d}{\d\hu^i}.
$$
Moreover, $\det\left(g^{\alpha\beta}-\hu^i\eta^{\alpha\beta}\right)=0$ for $1\le i\le N$, where $g^{\alpha\beta}$ is given by Equation~\eqref{eq:galphabeta} and we identify $t^\gamma=u^\gamma$. 

Therefore, if we assume that the operator $K_2$ is Poisson, then by Proposition~\ref{proposition:Schouten bracket of K1 and K2} the operators~$K_1$ and~$K_2$ form a pair of compatible Poisson operators and we can compute the associated central invariants.

\begin{proposition}\label{proposition:central invariants of K1 and K2}
Suppose that the operator $K_2$ corresponding to our homogeneous semisimple CohFT is Poisson. Then all the central invariants of the pair of operators $(K_1,K_2)$ are equal to $\frac{1}{24}$.
\end{proposition}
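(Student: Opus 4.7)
The plan is to reduce formula~\eqref{eq:definition of central invariants} to a single contribution and then to evaluate that contribution in canonical coordinates via standard Frobenius-manifold identities.

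\emph{Step 1 (reduction).} The operator $K_1=\eta^{-1}\d_x$ is of hydrodynamic type, and the change of variables to canonical coordinates $\hu^i$ is an $\eps$-independent Miura transformation, so $K_1$ stays hydrodynamic and $P_1^{[k]}=0$ for all $k\ge 1$. Moreover, by Equation~\eqref{eq:og up to genus 1}, $\og$ has no $\eps^1$-contribution; inspecting the definition of $K_2$ one sees that neither does $K_2$. These facts kill all but one piece of~\eqref{eq:definition of central invariants}, leaving
\[
c_i(\hu^i)=\frac{K_2^{[2],ii}_{3}}{3(f^i)^2},
\]
where $K_2^{[2],ii}_{3}$ is the $\eps^2\d_x^3$-coefficient of the diagonal component of $K_2$ in canonical coordinates.

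\emph{Step 2 (the symbol of $K_2^{[2]}$ in flat coordinates).} Starting from $\og^{[2]}=-\tfrac{1}{48}\int h_{\mu\nu}u^\mu_x u^\nu_x\,dx$ with $h_{\mu\nu}=c^\theta_{\theta\xi}c^\xi_{\mu\nu}$, a direct computation of $\delta\og^{[2]}/\delta u^\mu$ and of $L_\nu^k$ for $k=0,1$ yields the leading symbols
\[
(\hOmega(\og)^{[2]})^{\alpha\beta}_{2}=\tfrac{1}{24}\eta^{\alpha\mu}\eta^{\beta\nu}h_{\mu\nu},\qquad
(\hOmega^1(\og)^{[2]})^{\alpha\beta}_{1}=\tfrac{1}{12}\eta^{\alpha\mu}\eta^{\beta\nu}h_{\mu\nu}.
\]
The middle term $\hOmega(\og)_x(\tfrac{1}{2}-\mu)$ does not reach order $\d_x^3$, so only the two outer terms of $K_2$ contribute. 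Combining them --- the $\eps\d_\eps$-part of $\hE$ contributes the factor $1-\delta$, while its $u$-derivative part contributes $E(h_{\mu\nu})=(q_\mu+q_\nu)h_{\mu\nu}$ in flat coordinates, an identity that is in turn immediate from $\mathcal{L}_E h=2h$ (a consequence of $\mathcal{L}_E c=c$) --- I obtain
\[
K_2^{[2],\alpha\beta}_{3}=\frac{(3-\delta)X^{\alpha\beta}+\eta^{\alpha\mu}\eta^{\beta\nu}(q_\mu+q_\nu)h_{\mu\nu}}{24},\quad X^{\alpha\beta}=\eta^{\alpha\mu}\eta^{\beta\nu}h_{\mu\nu}.
\]

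\emph{Step 3 (canonical coordinates and conclusion).} In canonical coordinates the relations $\partial_{\hu^i}\cdot\partial_{\hu^j}=\delta_{ij}\partial_{\hu^i}$ give $c^\xi_{ij}=\delta_{ij}\delta_{i\xi}$ and $\sum_\theta c^\theta_{\theta\xi}=1$, so $h_{ij}=\delta_{ij}$ and $X^{ij}=\delta_{ij}(f^i)^2$. The $q$-dependent piece is most cleanly rewritten as the action on the two indices of $h$ of the grading $(1,1)$-tensor $Q=I-\nabla E$, whose flat-coordinate components are $\mathrm{diag}(q_\alpha)$; using $\nabla_jE^i=\delta^i_j+\Gamma^i_{jk}\hu^k$ one finds $Q^i_j=-\Gamma^i_{jk}\hu^k$ in canonical coordinates, and the diagonal contribution of the second piece becomes $-2(f^i)^2\Gamma^i_{il}\hu^l/24$. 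Hence
\[
K_2^{[2],ii}_{3}=\frac{(f^i)^2}{24}\bigl[(3-\delta)-2\Gamma^i_{il}\hu^l\bigr].
\]
Finally, the homogeneity $\mathcal{L}_E\eta^{-1}=(\delta-2)\eta^{-1}$ in canonical coordinates gives $E(f^i)=\delta f^i$, and for a diagonal metric the classical formula $\Gamma^i_{il}=-\tfrac{1}{2}\d_l\log f^i$ yields $\Gamma^i_{il}\hu^l=-\delta/2$. This cancels the $-\delta$ inside the bracket, leaving $K_2^{[2],ii}_{3}=(f^i)^2/8$ and therefore $c_i=\tfrac{1}{24}$.

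The main obstacle is the bookkeeping in Step~2 --- tracking the three terms of $K_2$ modulo $\d_x^3$ at $\eps^2$ and in particular capturing the $(1-\delta)$ factor coming from $\hE$ acting on $\eps^2$ --- together with the tensorial reinterpretation of the $(q_\mu+q_\nu)$-piece via the grading tensor $Q$ in Step~3. Once both are in place, the final cancellation producing the universal constant $\tfrac{1}{24}$ is essentially forced by the conformal homogeneity of the Frobenius manifold.
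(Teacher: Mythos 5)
Your proof is correct, and its skeleton is the same as the paper's: reduce formula~\eqref{eq:definition of central invariants} to $c_i=K^{[2],ii}_{2;3}/(3(f^i)^2)$ using that $K_1$ acquires no dispersive corrections in canonical coordinates and that both operators contain only even powers of $\eps$, compute the $\eps^2\d_x^3$-symbol of $K_2$ from the genus-one term of $\og$, and transform to canonical coordinates. The differences are in the endgame. The paper works from the alternative expression~\eqref{eq:second expression for K2} for $K_2$, which packages the whole symbol at once as $\tfrac{1}{24}(3-\mu_\alpha-\mu_\beta)c^\theta_{\theta\xi}c^{\xi\alpha\beta}$ --- manifestly tensorial since $\mu$ is $\eta$-antisymmetric --- and then concludes from $\eta^{ij}=\delta_{ij}f^i$ that $\mu^i_i=0$, so the diagonal entry is $3(f^i)^2/24$. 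You instead work from the original definition with $\hE$, obtaining $(3-\delta)X^{\alpha\beta}+\eta^{\alpha\mu}\eta^{\beta\nu}(q_\mu+q_\nu)h_{\mu\nu}$ (your identity $E(h_{\mu\nu})=(q_\mu+q_\nu)h_{\mu\nu}$ checks out against the homogeneity of $F$ together with $q_\mu+q_\nu=\delta$ whenever $\eta^{\mu\nu}\neq 0$), and you evaluate the $q$-dependent piece in canonical coordinates by interpreting $\mathrm{diag}(q_\alpha)$ as the tensor $Q=I-\nabla E$ and computing $Q^i_i=-\Gamma^i_{ik}\hu^k=\delta/2$ via $\Gamma^i_{il}=-\tfrac12\d_l\log f^i$ and $E(f^i)=\delta f^i$. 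This is exactly the cancellation the paper gets for free from $\mu^i_i=0$ (indeed $\mu=Q-\tfrac{\delta}{2}I$, so $Q^i_i=\delta/2$ is literally that statement), but reached through a Christoffel-symbol computation. Both routes are valid; the paper's use of $\mu$ is shorter and stays purely algebraic, while your version makes explicit the geometric mechanism by which the conformal dimension $\delta$ drops out.
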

\begin{proof}
In canonical coordinates the operators $K_1$ and $K_2$ have the form
\begin{align*}
K_1^{ij}=&K_{1;1}^{[0],ij}\d_x+K_{1;0}^{[0],ij}, && K_{1;1}^{[0],ij}=\delta_{ij}f^i, && f^i\in\mbC[[\hu^j-\hu^j_\orig]],\\
K_2^{ij}=&K_{2;1}^{[0],ij}\d_x+K_{2;0}^{[0],ij}+\sum_{k\ge 1}\eps^{2k}\sum_{l=0}^{2k+1}K^{[2k],ij}_{2;l}\d_x^l, && K_{2;1}^{[0],ij}=\delta_{ij}\hu^i f^i.
\end{align*}
Therefore,
$$
c_i=\frac{K^{[2],ii}_{2;3}}{3(f^i)^2}.
$$	
Consider the expansion
$$
\og=\sum_{g\ge 0}\og^{[2g]}\eps^{2g},\quad \og^{[2g]}\in\Lambda^0_{2g}.
$$
We have
\begin{align*}
K^{[2],\alpha\beta}_2=&\eta^{\alpha\mu}\eta^{\beta\nu}\left(\left(\frac{1}{2}-\mu_\beta\right)\d_x\circ L_\nu\left(\frac{\delta\og^{[2]}}{\delta u^\mu}\right)+\left(\frac{1}{2}-\mu_\alpha\right)L_\nu\left(\frac{\delta\og^{[2]}}{\delta u^\mu}\right)\circ\d_x\right. \\
&\left.+\d_x\circ L^1_\nu\left(\frac{\delta\og^{[2]}}{\delta u^\mu}\right)\circ\d_x\right),
\end{align*}
which implies that
$$
K^{[2],\alpha\beta}_{2;3}=\left(3-\mu_\alpha-\mu_\beta\right)\eta^{\alpha\mu}\eta^{\beta\nu}\frac{\d}{\d u^\nu_2}\frac{\delta\og^{[2]}}{\delta u^\mu}.
$$
By~\eqref{eq:og up to genus 1}, we have
$$
\og^{[2]}=-\frac{1}{48}\int c^\theta_{\theta\xi}c^\xi_{\alpha\beta}u^\alpha_xu^\beta_x dx,
$$
where we recall that $c^\alpha_{\beta\gamma}=\eta^{\alpha\mu}\frac{\d^3F}{\d t^\mu\d t^\beta\d t^\gamma}$ and we use the identification  $t^\theta=u^\theta$. Therefore, $\frac{\d}{\d u^\nu_2}\frac{\delta\og^{[2]}}{\d u^\mu}=\frac{1}{24}c^\theta_{\theta\xi}c^\xi_{\mu\nu}$ and, hence,
$$
K^{[2],\alpha\beta}_{2;3}=\frac{1}{24}(3-\mu_\alpha-\mu_\beta)c^\theta_{\theta\xi}c^{\xi\alpha\beta},
$$
where we raise the indices in the tensor $c^\alpha_{\beta\gamma}$ using the metric $\eta$. 
	
Note that 
$$
K^{[2],ij}_{2;3}=\frac{\d\hu^i}{\d u^\alpha}\frac{\d\hu^j}{\d u^\beta}K^{[2],\alpha\beta}_{2;3},
$$
which means that the collection of functions $K^{[2],\alpha\beta}_{2;3}$ is transformed to the canonical coordinates as a tensor. We use Latin indices for the expressions of the operator $\mu$, the metric $\eta$, and the tensor~$c^\alpha_{\beta\gamma}$ in canonical coordinates. We have
$\eta^{ij}=\delta_{ij}f^i$, $c^i_{jk}=\delta_{ij}\delta_{jk}$, and $c^{ijk}=(f^i)^2\delta_{ij}\delta_{jk}$.  
Therefore, 
$$
K^{[2],ij}_{2;3}=\frac{1}{24}\left(3\delta_{ij}(f^i)^2-\mu^i_j(f^j)^2-(f^i)^2\mu^j_i\right).
$$
Equation~\eqref{eq:mu and eta} implies that
$\mu^i_j f^j+f^i\mu^j_i=0$, hence $\mu^i_i=0$.
We conclude that
$K^{[2],ii}_{2;3}=\frac{(f^i)^2}{8}$, hence $c_i=\frac{1}{24}$.
\end{proof}

\subsection{Relation with the Dubrovin--Zhang hierarchy} \label{sec:Dubrovin-Zhang}

Beside the DR hierarchy, which we discuss in this paper, there is another Hamiltonian hierarchy associated to a semisimple homogeneous CohFT, called the {\it Dubrovin--Zhang (DZ) hierarchy} or the {\it hierarchy of topological type}~\cite{DZ01,BPS12b}. It has the same dispersionless part as the DR hierarchy. By a conjecture in~\cite{Bur15}, called the {\it DR/DZ equivalence conjecture} (see also a stronger version in~\cite{BDGR18}), it is related to the DR hierarchy by a Miura transformation that is close to identity. A long-standing open conjecture proposed by Dubrovin and Zhang claims that the DZ hierarchy possesses a bihamiltonian structure. 

We see that the Dubrovin--Zhang conjecture follows from the DR/DZ equivalence conjecture and Conjecture~\ref{main conjecture}. The Dubrovin--Zhang conjecture is checked at the approximation up to~$\eps^2$~\cite{DZ98}, and the central invariants of the resulting pair of compatible Poisson operators are all equal to $\frac{1}{24}$ (see e.g.~\cite{Liu18}), which agrees with Proposition~\ref{proposition:central invariants of K1 and K2}.


\section{Evidence III: explicit examples}\label{section: examples}

In this section we prove Conjecture~\ref{main conjecture} in several special cases, for particular semisimple homogeneous CohFTs.

\subsection{The trivial cohomological field theory}

Consider the trivial CohFT with $N=1$, $V=\<e_1\>$, $e=e_1$, $\eta_{1,1}=1$, and $c^\triv_{g,n}(e_1^{\otimes n})=1\in H^*(\oM_{g,n})$. Let $u_n$ denote $u^1_n$. The corresponding DR hierarchy coincides with the KdV hierarchy, 
$$
\og_{1,d}=\oh^{\KdV}_d, \quad d\ge -1
$$ 
(see \cite[Section 4.3.1]{Bur15}), where the sequence of Hamiltonians $\oh^\KdV_d\in\hLambda^0_0$, $d\ge -1$, of the KdV hierarchy is uniquely determined by the properties 
\begin{align*}
\oh^\KdV_1=&\int\left(\frac{u^3}{6}+\eps^2\frac{uu_{xx}}{24}\right)dx,\\
\oh^\KdV_d=&\int\left(\frac{u^{d+2}}{(d+2)!}+O(\eps^2)\right)dx,\quad d\ge -1,
\end{align*}
and the commutation relations
$$
\left\{\oh^\KdV_d,\oh^\KdV_1\right\}_{\d_x}=0,\quad d\ge -1,
$$
with respect to the Poisson bracket $\{\cdot,\cdot\}_{\d_x}$.

The trivial CohFT is homogeneous with 
$q_1=0$, $\delta=0$, $A=0$, $\mu_1=0$.
We have
$$
\og=\int\left(\frac{u^3}{6}+\eps^2\frac{uu_{xx}}{48}\right)dx,\qquad\frac{\delta\og}{\delta u}=\frac{u^2}{2}+\eps^2\frac{u_{xx}}{24},\qquad \hOmega(\og)=u+\frac{\eps^2}{24}\d_x^2,\qquad \hOmega^1(\og)=\frac{\eps^2}{12}\d_x,
$$
which, using formula~\eqref{eq:second expression for K2}, gives
$$
K_2=u\d_x+\frac{1}{2}u_x+\frac{\eps^2}{8}\d_x^3.
$$
The fact that the pair $(\d_x,u\d_x+\frac{1}{2}u_x+\frac{\eps^2}{8}\d_x^3)$ is a pair of compatible Poisson operators and that the Hamiltonians $\oh^\KdV_d$ satisfy the relations
$$
\left\{\cdot,\oh^\KdV_d\right\}_{K_2}=\left(d+\frac{3}{2}\right)\left\{\cdot,\oh^\KdV_{d+1}\right\}_{\d_x},\quad d\ge -1,
$$
is a standard fact in the theory of the KdV hierarchy (see e.g.~\cite{Dic03}). Thus, Conjecture~\ref{main conjecture} holds for the trivial CohFT.

\subsection{Higher $r$-spin theories}\label{subsubsection:higher rspin}
Let $r\ge 2$ and consider an $(r-1)$-dimensional complex vector space~$V$ with a fixed basis $e_1,\ldots,e_{r-1}$. Let $\eta_{\alpha\beta}\coloneqq\delta_{\alpha+\beta,r}$. There exists a unique CohFT (see e.g.~\cite{PPZ15}) $c^\rspin_{g,n}\colon V^{\otimes n} \to H^\even(\oM_{g,n})$ satisfying the following properties:
\begin{enumerate}
	\item $c^\rspin_{g,n}(\otimes_{i=1}^n e_{\alpha_i})=0$ if $r \mathop{\not|\ } (g-1)(r-2)+\sum_{i=1}^n(\alpha_i-1)$;
	\item $\deg c^\rspin_{g,n}(\otimes_{i=1}^n e_{\alpha_i})=2\tfrac{(g-1)(r-2)+\sum_{i=1}^n(\alpha_i-1)}{r}$ if $r  \mathop{|\ } (g-1)(r-2)+\sum_{i=1}^n(\alpha_i-1)$;
	\item $c^\rspin_{0,3}(e_\alpha\otimes e_\beta\otimes e_\gamma)=\delta_{\alpha+\beta+\gamma,r+1}\in H^0(\oM_{0,3})\cong \mbC$;
	\item $c^\rspin_{0,4}(e_2^{\otimes 2}\otimes e_{r-1}^{\otimes 2})=\frac{1}{r}[\pt]\in H^2(\oM_{0,4})$ if $r\ge 3$.
\end{enumerate}
Here $[\pt]$ denotes the cohomology class dual to a point in $\oM_{0,4}$. This CohFT is called the {\it $r$-spin CohFT}. It is homogeneous with
$q_\alpha=\frac{\alpha-1}{r}$, $r^\alpha=0$, $\delta=\frac{r-2}{r}$, $\mu_\alpha=\frac{2\alpha-r}{2r}$, $A_{\alpha\beta}=0$.
The $2$-spin CohFT coincides with the trivial CohFT.

In this section we check Conjecture~\ref{main conjecture} for the $r$-spin CohFT for $r=3,4,5$. To this end, we first describe the construction of the Gelfand--Dickey hierarchy, then we present its relation (through the Dubrovin--Zhang hierarchy) to the DR hierarchy for the $r$-spin theory for $r=3,4,5$ proved in~\cite{BG16}, and then we finally check Conjecture~\ref{main conjecture} for the $r$-spin CohFT for these values of~$r$. 

\begin{remark}
Note that for any differential polynomial $f\in\cA^0$ there exists a unique differential polynomial $\widetilde{f}\in\hcA^0_0$ such that $\left.\widetilde{f}\right|_{\eps=0}=f$. This gives a natural inclusion $\cA^0\hookrightarrow\hcA^0_0$. The same remark is true for local functionals. Regarding Poisson brackets, note that any Poisson operator~$K$, $K^{\alpha\beta}=\sum_{i\ge 0}K^{\alpha\beta}_i\d_x^i$, $K^{\alpha\beta}_i\in\cA^0$, can be expressed in a unique way as the sum of homogeneous operators: $K^{\alpha\beta}=\sum_{d\ge 0}\left(\sum_{i+j=d}K^{\alpha\beta}_{i,j}\d_x^i\right)$, where $K^{\alpha\beta}_{i,j}\in\cA^0_{d-i}$. If its degree zero part vanishes, i.e., if $K^{\alpha\beta}_{0,0}=0$, then we can naturally associate to such a Poisson operator $K$ the Poisson operator $\widetilde{K}^{\alpha\beta}=\sum_{d\ge 1}\eps^{d-1}\left(\sum_{i+j=d}K^{\alpha\beta}_{i,j}\d_x^i\right)$ that gives a Poisson bracket of degree~$1$ on the space $\hLambda^0$. We silently use this correspondence throughout this section.
\end{remark}

\subsubsection{Gelfand--Dickey hierarchy}

We review the construction of the Gelfand--Dickey hierarchy and its bihamiltonian structure following~\cite{Dic03}. 

Consider variables~$f_0,f_1,\ldots,f_{r-2}$. A pseudo-differential operator $A$ is a Laurent series
$$
A=\sum_{n=-\infty}^m a_n\d_x^n,
$$
where $m$ is an arbitrary integer and $a_n\in\cA^0_{f_0,f_1,\ldots,f_{r-2}}$. Let
\begin{gather*}
A_+\coloneqq\sum_{n=0}^m a_n\d_x^n,\qquad \res A\coloneqq a_{-1}.
\end{gather*}
The product of pseudo-differential operators is defined by the following commutation rule:
\begin{gather*}
\d_x^k\circ a\coloneqq\sum_{l=0}^\infty\frac{k(k-1)\ldots(k-l+1)}{l!}(\d_x^l a)\d_x^{k-l},\quad a\in\cA^0_{f_0,f_1,\ldots,f_{r-2}},\quad k\in\mbZ.
\end{gather*}
For any $m\ge 2$ and a pseudo-differential operator~$A$ of the form
$$
A=\d_x^m+\sum_{n=1}^\infty a_n\d_x^{m-n}
$$
there exists a unique pseudo-differential operator $A^{\frac{1}{m}}$ of the form
$$
A^{\frac{1}{m}}=\d_x+\sum_{n=0}^\infty \widetilde{a}_n\d_x^{-n}
$$
such that $\left(A^{\frac{1}{m}}\right)^m=A$.
Let 
$$
L\coloneqq\d_x^r+f_{r-2}\d_x^{r-2}+\ldots+f_1\d_x+f_0.
$$
The $r$-th Gelfand--Dickey (GD) hierarchy is the following system of partial differential equations: 
\begin{gather}\label{eq:GD hierarchy}
\frac{\d L}{\d T^\alpha_a}=\left[\left(L^{a+\frac{\alpha}{r}}\right)_+,L\right],\quad 1\le\alpha\le r-1,\quad a\ge 0.
\end{gather}

Let us describe a Hamiltonian structure for the GD hierarchy. Let $X_0,X_1,\ldots,X_{r-2}\in\cA^0_{f_0,\ldots,f_{r-2}}$ be some differential polynomials. Consider a pseudo-differential operator 
$$
X\coloneqq\d_x^{-(r-1)}\circ X_{r-2}+\ldots+\d_x^{-1}\circ X_0.
$$ 
It is easy to see that the operator $[X,L]_+$ has the form
$$
[X,L]_+=\sum_{0\le\alpha,\beta\le r-2}\left(K^{\GD;\alpha\beta}_1X_\beta\right)\d_x^\alpha,
$$
where
$$
K^{\GD;\alpha\beta}_1=\sum_{i\ge 0}K^{\GD;\alpha\beta}_{1;i}\d_x^i,\qquad K^{\GD;\alpha\beta}_{1;i}\in\cA^0_{f_0,\ldots,f_{r-2}},
$$
and the sum is finite. The operator $K^{\GD}_1=\left(K^{\GD;\alpha\beta}_1\right)_{0\le\alpha,\beta\le r-2}$ is Poisson and its degree zero part vanishes.

Consider the local functionals
$$
\oh_{\alpha,a}^{\GD}\coloneqq-\frac{r}{(a+1)r+\alpha}\int\res L^{a+1+\frac{\alpha}{r}}dx,\quad 1\le\alpha\le r-1,\quad a\ge -1.
$$
We have 
$$
\left\{\oh^{\GD}_{\alpha,a},\oh^{\GD}_{\beta,b}\right\}_{K^{\GD}_1}=0,\quad 1\le\alpha,\beta\le r-1,\quad a,b\ge 0,
$$
and the local functionals $\oh^\GD_{\alpha,-1}$, $1\le\alpha\le N$, are linearly independent Casimirs of the Poisson bracket~$\{\cdot,\cdot\}_{K_1^\GD}$. For a local functional $\oh\in\Lambda^0_{f_0,f_1,\ldots,f_{r-2}}$ define a pseudo-differential operator~$\frac{\delta\oh}{\delta L}$ by
$$
\frac{\delta\oh}{\delta L}\coloneqq\d_x^{-(r-1)}\circ\frac{\delta\oh}{\delta f_{r-2}}+\ldots+\d_x^{-1}\circ\frac{\delta\oh}{\delta f_0}.
$$
Then the right-hand side of~\eqref{eq:GD hierarchy} can be written in the following way: 
$$
\left[\left(L^{a+\frac{\alpha}{r}}\right)_+,L\right]=\left[\frac{\delta\oh^{\GD}_{\alpha,a}}{\delta L},L\right]_+=\sum_{0\le\beta,\gamma\le r-2}\left(K^{\GD;\beta\gamma}_1\frac{\delta\oh^{\GD}_{\alpha,a}}{\delta f_\gamma}\right)\d_x^\beta.
$$
Therefore, the sequence of local functionals $\oh^{\GD}_{\alpha,a}$ together with the Poisson operator~$K^{\GD}_1$ defines a Hamiltonian structure of the GD hierarchy~\eqref{eq:GD hierarchy}.

The GD hierarchy is endowed with a bihamiltonian structure, which can be described in the following way. Let
$$
\tX\coloneqq\d_x^{-r}\circ X_{r-1}+\d_x^{-(r-1)}\circ X_{r-2}+\ldots+\d_x^{-1}\circ X_0.
$$
It is easy to see that the operator $\left(L\circ\tX\right)_+\circ L-L\circ\left(\tX\circ L\right)_+$ has the form
$$
\left(L\circ\tX\right)_+\circ L-L\circ\left(\tX\circ L\right)_+=\sum_{0\le\alpha,\beta\le r-1}\left(\tK^{\alpha\beta}X_\beta\right)\d_x^\alpha,
$$
where
$$
\tK^{\alpha\beta}=\sum_{i\ge 0}\tK^{\alpha\beta}_i\d_x^i,\quad \tK^{\alpha\beta}_i\in\cA^0_{f_0,\ldots,f_{r-2}},
$$
and the sum is finite. There is the following result:
$$
\sum_{\beta=0}^{r-1}\tK^{r-1,\beta}X_\beta=\res[\tX,L]=\d_x\left(\sum_{\substack{0\le j\le r-1\\1\le\alpha\le r-j}}{-j-1\choose\alpha}\d_x^{\alpha-1}(f_{j+\alpha}X_j)\right),
$$
where we adopt the conventions $f_r\coloneqq1$ and $f_{r-1}\coloneqq0$. Let 
$$
f(X_0,X_1,\ldots,X_{r-2})\coloneqq\frac{1}{r}\left(\sum_{\substack{0\le j\le r-2\\1\le\alpha\le r-j}}{-j-1\choose\alpha}\d_x^{\alpha-1}(f_{j+\alpha}X_j)\right).
$$
Define an operator $K^\GD_2=\left(K^{\GD;\alpha\beta}_2\right)_{0\le\alpha,\beta\le r-2}$ by the equation
$$
\sum_{0\le\alpha,\beta\le r-2}\left(\tK^{\alpha\beta} X_\beta\right)\d_x^\alpha+\sum_{\alpha=0}^{r-2}\left(\tK^{\alpha,r-1}f(X_0,\ldots,X_{r-2})\right)\d_x^\alpha=\sum_{0\le\alpha,\beta\le r-2}\left(K^{\GD;\alpha\beta}_2X_\beta\right)\d_x^\alpha.
$$
The operator $K^\GD_2$ is Poisson and compatible with the operator~$K^\GD_1$. It is easy to see that the degree zero part of the operator $K^\GD_2$ vanishes.

Consider the local functionals
$$
\og^\GD_{\alpha,a}\coloneqq\frac{r}{ar+\alpha}\int\res L^{a+\frac{\alpha}{r}}dx=-\oh_{\alpha,a-1}^\GD,\quad 1\le\alpha\le r-1,\quad a\ge 0.
$$
The right-hand side of~\eqref{eq:GD hierarchy} can be written in the following way using the local functionals $\og^\GD_{\alpha,a}$ and the Poisson operator $K_2^\GD$:
$$
\left[\left(L^{a+\frac{\alpha}{r}}\right)_+,L\right]=\sum_{0\le\beta,\gamma\le r-2}\left(K^{\GD;\beta\gamma}_2\frac{\delta\og^{\GD}_{\alpha,a}}{\delta f_\gamma}\right)\d_x^\beta,
$$
or, equivalently,
$$
\left\{\cdot,\oh^\GD_{\alpha,a}\right\}_{K_2^\GD}=-\left\{\cdot,\oh^\GD_{\alpha,a+1}\right\}_{K_1^\GD},\quad 1\le\alpha\le r-1,\quad a\ge -1,
$$
which endows the GD hierarchy with a bihamiltonian structure.

\subsubsection{Relation with the DR hierarchy}
As it is mentioned in Section~\ref{sec:Dubrovin-Zhang}, there is another bihamiltonian hierarchy associated, in particular, to the $r$-spin CohFT, the Dubrovin--Zhang (DZ) hierarchy. In this section, we recall the relation between the $r$-th GD hierarchy and the DZ hierarchy for the $r$-spin CohFT, and then present a relation between the DR hierarchy and the DZ hierarchy for $r=3$, $4$, and $5$.  

Introduce new variables $w^1,\ldots,w^{r-1}$ and consider the following Miura transformation:
$$
f_i\mapsto w^\alpha(f_{*,*})=\frac{1}{(r-\alpha)(-r)^{\frac{r-\alpha-1}{2}}}\res L^{(r-\alpha)/r}.
$$
Define Poisson operators $K_1^{\text{$r$-spin}}=\left(K_1^{\text{$r$-spin};\alpha\beta}\right)_{1\le\alpha,\beta\le r-1}$, $K_2^{\text{$r$-spin}}=\left(K_2^{\text{$r$-spin};\alpha\beta}\right)_{1\le\alpha,\beta\le r-1}$, and local functionals $\oh^{\text{$r$-spin}}_{\alpha,d}\in\Lambda_{w^1,\ldots,w^{r-1}}$, $1\le\alpha\le r-1$, $d\ge -1$, by
\begin{align*}
K_1^{\text{$r$-spin}}\coloneqq&(-r)^{\frac{r}{2}}K^{\GD}_{1;w},\\
K_2^{\text{$r$-spin}}\coloneqq&K^{\GD}_{2;w},\\
\oh_{\alpha,d}^{\text{$r$-spin}}\coloneqq&\frac{1}{(-r)^{\frac{\alpha-1+r(d+1)}{2}-d}(\alpha+rd)!_r}\oh_{\alpha,d}^{\GD}[w],
\end{align*}
where 
$$
(\alpha+rd)!_r\coloneqq\begin{cases}
\prod_{i=0}^d(\alpha+ri),&\text{if $d\ge 0$},\\
1,&\text{if $d=-1$}.
\end{cases}
$$ 
Recall that $K^\GD_{1;w}$ denotes the Miura transform of the operator~$K^{\GD}_1$. The local functionals $\oh_{\alpha,d}^{\text{$r$-spin}}$ together with the Poisson operator $K_1^{\text{$r$-spin}}$ define the Hamiltonian hierarchy
$$
\frac{\d w^\alpha}{\d t^\beta_b}=K_1^{\text{$r$-spin};\alpha\mu}\frac{\delta\oh_{\beta,b}^{\text{$r$-spin}}}{\delta w^\mu},\quad 1\le\alpha,\beta\le N,\quad b\ge 0,
$$
which is bihamiltonian with the bihamiltonian recursion given by
\begin{gather}\label{eq:bihamiltonian recursion for DZ-rspin}
\left\{\cdot,\oh^\rspin_{\alpha,d}\right\}_{K^\rspin_2}=\frac{\alpha+(d+1)r}{r}\left\{\cdot,\oh^\rspin_{\alpha,d+1}\right\}_{K^\rspin_1},\quad 1\le\alpha\le r-1,\quad d\ge -1.
\end{gather}
This bihamiltonian hierarchy is the Dubrovin--Zhang hierarchy corresponding to the $r$-spin CohFT~\cite{DZ01,FSZ10}.

In~\cite{BG16} it is proved that for $r=3$, $4$, and $5$ the local functionals $\og_{\alpha,d}$ and the Poisson operator $K_1=\eta^{-1}\d_x$ of the DR hierarchy are related to the local functionals $\oh_{\alpha,d}^\rspin$ and the Poisson operator $K_1^\rspin$ by the following Miura transformations:
\begin{align}
&\left\{
\begin{aligned}
&w^1=u^1,\\
&w^2=u^2,
\end{aligned}\right.&&\text{for $r=3$};\label{eq:Miura transformation for 3-spin}\\
&\left\{
\begin{aligned}
&w^1=u^1+\frac{\eps^2}{96}u^3_{xx},\\
&w^2=u^2,\\
&w^3=u^3,
\end{aligned}\right.&&\text{for $r=4$};\label{eq:Miura transformation for 4-spin}\\
&\left\{
\begin{aligned}
&w^1=u^1+\frac{\eps^2}{60}u^3_{xx},\\
&w^2=u^2+\frac{\eps^2}{60}u^4_{xx},\\
&w^3=u^3,\\
&w^4=u^4,
\end{aligned}\right.&&\text{for $r=5$}.\label{eq:Miura transformation for 5-spin}
\end{align}

\subsubsection{Proof of Conjecture~\ref{main conjecture} for $r=3,4,5$}

For the first part of the conjecture it is sufficient to check that the operators $K_2$ and $K_2^\rspin$ are related by the Miura transformations given by equations~\eqref{eq:Miura transformation for 3-spin},~\eqref{eq:Miura transformation for 4-spin}, and~\eqref{eq:Miura transformation for 5-spin}.

For $r=3$ we have the following formula for $K^{\text{$3$-spin}}_2$:
\begin{align*}
&\begin{pmatrix}
\frac{2}{9}(w^2)^2\d_x+\frac{2}{9}w^2w^2_x+\eps^2\left(\frac{5}{54}w^2\d_x^3+\frac{5}{36}w^2_x\d_x^2+\frac{1}{12}w^2_{xx}\d_x+\frac{1}{54}w^2_{xxx}\right)+\frac{\eps^4}{162}\d_x^5 & w^1\d_x+\frac{1}{3}w^1_x \\
 w^1\d_x+\frac{2}{3}w^1_x & \frac{2}{3}w^2\d_x+\frac{1}{3}w^2_x+\frac{2}{9}\eps^2\d_x^3
\end{pmatrix}.
\end{align*}
On the other hand, from~\cite[Proposition~3.7]{BG16}) and~\eqref{eq:properties of og} we obtain
\begin{gather*}
\og=\int\left(\frac{(u^1)^2 u^2}{2}+\frac{(u^2)^4}{72}+\eps^2\left(\frac{(u^2)^2 u^2_2}{144}+\frac{u^1 u^1_2}{24}\right)+\frac{\eps^4}{1728}u^2 u^2_4\right)dx.
\end{gather*}
Computing the operator $K_2$ using this expression (and identifying $u^\alpha=w^\alpha$) we get exactly the operator $K_2^{\text{$3$-spin}}$. 

For $r=4$ the operator $K^{\text{$4$-spin}}_2$ is given by
\begin{flalign*}
K^{\text{$4$-spin};1,1}_2=&\left(\frac{(w^3)^3}{32}+\frac{3(w^2)^2}{16}\right)\d_x+\frac{3w^2 w^2_x}{16}+\frac{3(w^3)^2 w^3_x}{64}&&\\
&+\eps^2\left(\left(\frac{7(w^3)^2}{256}+\frac{w^1}{48}\right)\d_x^3+\left(\frac{w^1_x}{32} +\frac{21w^3 w^3_x}{256}\right)\d_x^2+\left(\frac{5(w^3_x)^2}{128}+\frac{13w^3 w^3_{xx}}{256}+\frac{w^1_{xx}}{24}\right)\d_x\right.&&\\
&\hspace{1.2cm}\left.+\frac{3w^3_x w^3_{xx}}{128}+\frac{w^1_{xxx}}{64}+\frac{3w^3 w^3_{xxx}}{256}\right)&&\\
&+\eps^4\left(\frac{7w^3}{1152}\d_x^5 +\frac{35w^3_x}{2304}\d_x^4+\frac{91 w^3_{xx}}{4608}\d_x^3+\frac{133 w^3_{xxx}}{9216}\d_x^2+\frac{47 w^3_4}{9216}\d_x+\frac{w^3_5}{1536}\right)&&\\
&+\eps^6\frac{17}{36864}\d_x^7,&&
\end{flalign*}
\begin{flalign*}
K^{\text{$4$-spin};1,2}_2=&-\left(K^{\text{$4$-spin};2,1}_2\right)^\dagger&&\\
=&\frac{5w^2 w^3}{16}\d_x+\frac{w^3w^2_x}{8}+\frac{w^2 w^3_x}{8}+\eps^2\left(\frac{7w^2}{64}\d_x^3+\frac{7w^2_x}{48}\d_x^2+\frac{17w^2_{xx}}{192}\d_x+\frac{w^2_{xxx}}{48}\right),&&
\end{flalign*}
\begin{flalign*}
K^{\text{$4$-spin};1,3}_2=&-\left(K^{\text{$4$-spin};3,1}_2\right)^\dagger=w^1\d_x+\frac{w^1_x}{4}+\eps^2\left(\frac{7w^3}{192}\d_x^3+\frac{7w^3_x}{192}\d_x^2\right)+\eps^4\frac{7}{768}\d_x^5,&&
\end{flalign*}
\begin{flalign*}
K^{\text{$4$-spin};2,2}_2=&\left(\frac{(w^3)^2}{8}+w^1\right)\d_x+\frac{w^1_x}{2}+\frac{w^3w^3_x}{8}+\eps^2\left(\frac{w^3}{8}\d_x^3+\frac{3w^3_x}{16}\d_x^2+\frac{w^3_{xx}}{12}\d_x+\frac{w^3_{xxx}}{96}\right)+\frac{\eps^4}{64}\d_x^5,&&
\end{flalign*}
\begin{flalign*}
K^{\text{$4$-spin};2,3}_2=&-\left(K^{\text{$4$-spin};3,2}_2\right)^\dagger=\frac{3w^2}{4}\d_x+\frac{w^2_x}{4},&&
\end{flalign*}
\begin{flalign*}
K^{\text{$4$-spin};3,3}_2=&\frac{w^3}{2}\d_x+\frac{w^3_x}{4}+\eps^2\frac{5}{16}\d_x^3.&&
\end{flalign*}
On the other hand, from~\cite[Proposition~3.8]{BG16}) and~\eqref{eq:properties of og} we obtain
\begin{align*}
\og=&\int\left[\frac{(u^1)^2u^3}{2}+\frac{u^1(u^2)^2}{2}+\frac{(u^2)^2(u^3)^2}{16}+\frac{(u^3)^5}{960}+\right.\\
&\phantom{\int a}+\eps^2\left(\frac{u^1u^1_2}{16}+\frac{u^3_2(u^2)^2}{192}+\frac{u^3u^2u^2_2}{48}+\frac{u^1_2(u^3)^2}{192}+\frac{(u^3)^3u^3_2}{768}\right)+\\
&\phantom{\int a}\left.+\eps^4\left(\frac{u^2u^2_4}{640}+\frac{(u^3)^2u^3_4}{4096}+\frac{3u^1u^3_4}{2560}\right)+\eps^6\frac{u^3u^3_6}{49152}\right]dx.
\end{align*}
Computing the operator $K_2$ using this expression and performing the Miura transformation~\eqref{eq:Miura transformation for 4-spin} we obtain the operator $K_2^{\text{$4$-spin}}$. 

For $r=5$ the first part of Conjecture~\ref{main conjecture} is proved using the same scheme: we compute the operator $K_2^{\text{$5$-spin}}$ explicitly, then we compute the operator $K_2$ using the expression for the local functional~$\og_{1,1}$ from~\cite[Proposition~3.10]{BG16} and the first equation in~\eqref{eq:properties of og}, and finally we check that the resulting two operators are related by the Miura transformation~\eqref{eq:Miura transformation for 5-spin}. This computation is performed in \textsf{Mathematica} and is not presented here, since the computations and even the expressions for the Poisson operators get too involved.  

After we have proved the first part of the conjecture, the second part becomes easy in all three cases. Indeed, since $\mu_\alpha=\frac{2\alpha-r}{2r}$, one can easily check that Equations~\eqref{eq:DR biham} and~\eqref{eq:bihamiltonian recursion for DZ-rspin} agree.


\subsection{The Gromov--Witten theory of $\CP^1$}\label{subsubsection:projective line}

Consider the CohFT controlling the Gromov--Witten theory of $\CP^1$ (see e.g.~\cite[Section~6]{BR16}). For simplicity, we set the degree parameter (typically denoted by $q$) equal to $1$. This CohFT has rank $2$ and is homogeneous with
\begin{gather*}
(q_1,q_2)=(0,1),\quad (r^1,r^2)=(0,2),\quad \delta=1,\quad \eta=
\begin{pmatrix}
0 & 1\\
1 & 0
\end{pmatrix},\quad
\mu=
\begin{pmatrix}
-\frac{1}{2} & 0\\
0 & \frac{1}{2}
\end{pmatrix},\quad
A=
\begin{pmatrix}
2 & 0\\
0 & 2
\end{pmatrix}.
\end{gather*}
The corresponding DR hierarchy is studied in~\cite{BR16}. In particular, it is related to the extended Toda hierarchy (see e.g.~\cite{CDZ04}).

\begin{notation} In computations we use the following two formal power series:
$$
S(z)\coloneqq\frac{e^{z/2}-e^{-z/2}}{z} \qquad \text{and} \qquad \tS(z)\coloneqq\frac{e^{z/2}+e^{-z/2}}{2}.
$$ 
\end{notation}

Let us compute the operator $K_2$ for the DR hierarchy. 

\begin{lemma}
We have
\begin{gather}\label{eq:expression for og}
\og=\int\left(\frac{(u^1)^2u^2}{2}+\sum_{g\ge 1}\eps^{2g}\frac{B_{2g}}{(2g)(2g)!}u^1u^1_{2g}+e^{S(\eps\d_x)u^2}-u^2-\frac{(u^2)^2}{2}\right)dx,
\end{gather}
where $B_{2g}$ are the Bernoulli numbers.
\end{lemma}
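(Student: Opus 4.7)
The plan is to compute $\og$ from its defining formula by splitting the sum according to the types of insertions. Since the $\CP^1$ CohFT has only two basis elements $e_1$ (unit) and $e_2$ (divisor), every contributing term in $\og$ is indexed by how many of each appear. An efficient shortcut is to invoke the relation $\og_{\un,1} = (D-2)\og$ from \eqref{eq:properties of og}: because $A = (1,0)$ for $\CP^1$ we have $\og_{\un,1} = \og_{1,1}$, and this Hamiltonian is computed explicitly in \cite{BR16}. The lemma then reduces to checking that applying $D-2$ to the claimed right-hand side yields the known $\og_{1,1}$, using that $D-2$ vanishes on $\hLambda^0$ only on local functionals of the form $\int u^\alpha u^\beta\,dx$, and verifying that the candidate expression has no such components (so inversion is unambiguous up to these, and they are already absent from the formula).

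For a direct proof I would treat three cases. First, mixed $e_1$/$e_2$ contributions: homogeneity with $(q_1,q_2)=(0,1)$, $\delta=1$, $r^2=2$ combined with the dimension constraint $\deg\lambda_g + \deg c_{g,n} = 3g-3+n$ restricts the surviving terms, and the pullback axiom absorbs what remains into the pure-$e_1$ case. Second, the all-$e_1$ contribution: iteratively apply $c_{g,n+1}(\cdots\otimes e_1) = \pi^* c_{g,n}(\cdots)$ together with the pushforward of $\lambda_g$-weighted DR cycles under forgetful maps (polynomiality via Hain's formula on the compact type locus where $\lambda_g$ is supported), reducing to $n\leq 2$. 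The $n=2$ Hodge integral
\[
\int_{\DR_g(a,-a)} \lambda_g\, c_{g,2}(e_1 \otimes e_1) = \frac{B_{2g}}{g\cdot (2g)!}\, a^{2g}, \qquad g\geq 1,
\]
follows from the unit axiom (reducing to $\oM_{g,1}$) and the Faber--Pandharipande $\lambda_g$-formula, yielding precisely the Bernoulli sum $\sum_{g\geq 1}\eps^{2g}\frac{B_{2g}}{2g(2g)!}u^1 u^1_{2g}$ after conversion to $u$-variables via \eqref{eq:u-p change}.

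Third, the all-$e_2$ contribution: the required identity is essentially
\[
\int_{\DR_g(a_1,\ldots,a_n)} \lambda_g\, c_{g,n}(e_2^{\otimes n}) \;=\; [\eps^{2g}]\prod_{i=1}^n \frac{\sinh(\eps a_i/2)}{\eps a_i/2}
\]
(up to an overall sign), for $\sum a_i=0$, which packages the stationary sector of $\CP^1$ Gromov--Witten theory paired with the top Hodge class via the Okounkov--Pandharipande GW/Hurwitz correspondence. Summing over $n$ and converting to $u$-variables reassembles $e^{S(\eps\d_x)u^2}$, and the subtracted $u^2 + (u^2)^2/2$ corresponds to the unstable $(g,n)=(0,1),(0,2)$ cases excluded from the sum defining $\og$.

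The main obstacle is the all-$e_2$ identity, which encodes nontrivial Hurwitz-number content and the appearance of $S(z)$ as the sinh-factor. Via the shortcut through $\og_{\un,1} = (D-2)\og$ this difficulty is offloaded onto the explicit $\og_{1,1}$-computation of \cite{BR16}, so the proof strategy essentially amounts to: take the formula from \cite{BR16}, apply $D-2$ to the claimed expression, and match term by term. The remaining algebraic check — that $(D-2)$ acting on $e^{S(\eps\d_x)u^2}-u^2-(u^2)^2/2$ and on $\sum_{g\geq 1}\eps^{2g}\frac{B_{2g}}{2g(2g)!}u^1 u^1_{2g}$ and on $\frac{(u^1)^2u^2}{2}$ reproduces the explicit $\og_{1,1}$ of \cite{BR16} — is a direct computation using that a monomial $u^{\alpha_1}_{i_1}\cdots u^{\alpha_k}_{i_k}$ has $(D-2)$-eigenvalue $k+\sum_j i_j -2$, and in particular $(D-2)(u^1 u^1_{2g}) = 2g\,u^1 u^1_{2g}$ and $(D-2)e^{S(\eps\d_x)u^2}$ can be expanded monomial-by-monomial.
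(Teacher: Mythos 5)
Your proposal is correct and, in its operative form (the ``shortcut''), is exactly the paper's proof: the authors take the explicit $\og_{1,1}$ from \cite{BR16b}, observe that $(D-2)$ determines $\og$ up to quadratic terms $\int u^\alpha u^\beta\,dx$ which are absent since $F$ starts at cubic order, and verify $(D-2)f=\og_{1,1}$ term by term, with the only nontrivial step being $D\,e^{S(\eps\d_x)u^2}=\tS(\eps\d_x)u^2\cdot e^{S(\eps\d_x)u^2}$. The direct intersection-theoretic route you sketch is not needed and is not what the paper does.
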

\begin{proof}
By~\cite{BR16b} we have
$$
\og_{1,1}=\int\left(\frac{(u^1)^2u^2}{2}+\sum_{g\ge 1}\eps^{2g}\frac{B_{2g}}{(2g)!}u^1u^1_{2g}+\left(\tS(\eps\d_x)u^2-2\right)e^{S(\eps\d_x)u^2}+u^2\right)dx.
$$
Denote by $f$ the expression under the integral sign on the right-hand side of~\eqref{eq:expression for og}. Note that $f|_{\eps=0}\in\mbC[[u^1,u^2]]$ starts from cubic terms in the variables $u^1,u^2$. Therefore, it is sufficient to check that
$$
(D-2)f=\frac{(u^1)^2u^2}{2}+\sum_{g\ge 1}\eps^{2g}\frac{B_{2g}}{(2g)!}u^1u^1_{2g}+\left(\tS(\eps\d_x)u^2-2\right)e^{S(\eps\d_x)u^2}+u^2,
$$
where we recall that $D=\sum_{n\ge 0}(n+1)u^\alpha_n\frac{\d}{\d u^\alpha_n}$.
Since
\begin{align*}
(D-2)\frac{(u^1)^2u^2}{2}& =\frac{(u^1)^2u^2}{2}, 
& (D-2)u^1u^1_{2g} & =2gu^1u^1_{2g}, \\
(D-2)\frac{(u^2)^2}{2}&=0,
& (D-2)u^2 &=-u^2,
\end{align*}
it is sufficient to check that
$$
D e^{S(\eps\d_x)u^2}=\tS(\eps\d_x)u^2\cdot e^{S(\eps\d_x)u^2},
$$
which follows from
$$
D e^{S(\eps\d_x)u^2}=\left(\eps\d_x\circ S'(\eps\d_x)+S(\eps\d_x)\right)u^2\cdot e^{S(\eps\d_x)u^2}=\tS(\eps\d_x)u^2\cdot e^{S(\eps\d_x)u^2}.
$$
\end{proof}

We further compute
$$
\frac{\delta\og}{\delta u^1}=u^1u^2+\sum_{g\ge 1}\eps^{2g}\frac{B_{2g}}{g(2g)!}u^1_{2g},\qquad \frac{\delta\og}{\delta u^2}=\frac{(u^1)^2}{2}+S(\eps\d_x)e^{S(\eps\d_x)u^2}-1-u^2.
$$
Therefore,
\begin{align*}
\hOmega(\og)=&\begin{pmatrix}
S(\eps\d_x)\circ e^{S(\eps\d_x)u^2}S(\eps\d_x)-1 & u^1\\
u^1 & u^2+\sum_{g\ge 1}\frac{B_{2g}}{g(2g)!}(\eps\d_x)^{2g}
\end{pmatrix},\\
\hOmega^1(\og)=&\begin{pmatrix}
\eps\left(S(\eps\d_x)\circ e^{S(\eps\d_x)u^2}S'(\eps\d_x)+S'(\eps\d_x)\circ e^{S(\eps\d_x)u^2}S(\eps\d_x)\right) & 0\\
0 & \sum_{g\ge 1}\eps^{2g}\frac{2 B_{2g}}{(2g)!}\d_x^{2g-1}
\end{pmatrix}.
\end{align*}
Finally, we obtain
\begin{align*}
K_2=&
\begin{pmatrix}
S(\eps\d_x)\d_x\circ e^{S(\eps\d_x)u^2}\tS(\eps\d_x)+\tS(\eps\d_x)\circ e^{S(\eps\d_x)u^2}S(\eps\d_x)\d_x & u^1\d_x\\
\d_x\circ u^1 & \sum_{g\ge 0}\eps^{2g}\frac{2B_{2g}}{(2g)!}\d_x^{2g+1}
\end{pmatrix}\\
=&\begin{pmatrix}
\eps^{-1}\left(e^{\frac{e^{\eps\d_x}-1}{\eps\d_x}u^2}e^{\eps\d_x}-e^{\frac{1-e^{-\eps\d_x}}{\eps\d_x}u^2}e^{-\eps\d_x}\right) & u^1\d_x\\
\d_x\circ u^1 & \frac{e^{\eps\d_x}+1}{e^{\eps\d_x}-1}\eps\d_x^2
\end{pmatrix}.
\end{align*}

On the other hand, consider the extended Toda hierarchy from~\cite{CDZ04}.  The dependent variables of the extended Toda hierarchy are denoted by $v$ and $u$ in~\cite{CDZ04}.  Following~\cite{BR16} we denote $v^1=v$ and $v^2=u$. The extended Toda hierarchy is bihamiltonian with the pair of compatible Poisson operators $(K_1^\Td,K_2^{\Td})$ given by
\begin{align*}
K_1^\Td=&
\begin{pmatrix}
0 & \eps^{-1}\left(e^{\eps\d_x}-1\right)\\
\eps^{-1}\left(1-e^{-\eps\d_x}\right) & 0
\end{pmatrix},\\
K_2^\Td=&
\begin{pmatrix}
\eps^{-1}\left(e^{\eps\d_x}\circ e^{v^2}-e^{v^2}e^{-\eps\d_x}\right) & \eps^{-1}v^1\left(e^{\eps\d_x}-1\right)\\
\eps^{-1}\left(1-e^{-\eps\d_x}\right)\circ v^1 & \eps^{-1}\left(e^{\eps\d_x}-e^{-\eps\d_x}\right)
\end{pmatrix}.
\end{align*}
It is straightforward to check that the Miura transformation
$$
u^1=e^{-\frac{\eps\d_x}{2}}v^1,\qquad u^2=\frac{\eps\d_x}{e^{\frac{\eps\d_x}{2}}-e^{-\frac{\eps\d_x}{2}}}v^2
$$
takes the pair $(K_1^\Td,K_2^{\Td})$ to the pair $(K_1,K_2)$. Thus, Part~1 of Conjecture~\ref{main conjecture} holds for the CohFT of the Gromov--Witten theory of $\CP^1$.

Let us check Part~2 of Conjecture~\ref{main conjecture}. Let $\oh^\Td_{\alpha,d}$ denote the Hamiltonians of the extended Toda hierarchy. In~\cite{BR16} it is proved that
\begin{gather}\label{eq:DR and Toda Hamiltonians}
\og_{\alpha,d}=\sum_{i=0}^{d+1}(-1)^i(S_i^*)^\mu_\alpha\oh^\Td_{\mu,d-i}[u],\quad d\ge -1,
\end{gather}
where $S_i^*$, $i\ge 0$, are certain matrices described in~\cite[Section~6]{BR16}. The bihamiltonian recursion for the extended Toda hierarchy is given by (see \cite[Theorem~3.1]{CDZ04})
$$
\left\{\cdot,\oh_{\alpha,d}^\Td\right\}_{K_2^\Td}=\left(d+\frac{3}{2}+\mu_\alpha\right)\left\{\cdot,\oh_{\alpha,d+1}^\Td\right\}_{K_1^\Td}+R^\beta_\alpha\left\{\cdot,\oh_{\beta,d}^\Td\right\}_{K_1^\Td},\quad d\ge -1,
$$
where
$$
R=(R^\beta_\alpha)\coloneqq\begin{pmatrix}
0 & 0\\
2 & 0
\end{pmatrix}.
$$
Using~\eqref{eq:DR and Toda Hamiltonians} and the property
$$
A_\alpha^\gamma(S_i^*)^\beta_\gamma=(S_i^*)_\alpha^\gamma R^\beta_\gamma+(S_{i+1}^*)^\beta_\alpha(i+1+\mu_\alpha-\mu_\beta),\quad i\ge 0,
$$
which is a standard fact in Gromov--Witten theory (or can be checked using the explicit formula for the matrices $S_i^*$ given in~\cite[Section 6]{BR16}), we obtain~\eqref{eq:DR biham} in this case.


\begin{thebibliography}{KMGZ70}

\bibitem[Bur15]{Bur15} A. Buryak. {\it Double ramification cycles and integrable hierarchies}. Communications in Mathematical Physics~336 (2015), no.~3, 1085--1107.

\bibitem[BDGR18]{BDGR18} A. Buryak, B. Dubrovin, J. Gu\'er\'e, P. Rossi. {\it Tau-structure for the double ramification hierarchies}. Communications in Mathematical Physics~363 (2018), no.~1, 191--260. 

\bibitem[BDGR19]{BDGR19} A. Buryak, B. Dubrovin, J. Gu\'er\'e, P. Rossi. {\it Integrable systems of double ramification type}. International Mathematics Research Notices, rnz029, https://doi.org/10.1093/imrn/rnz029.

\bibitem[BG16]{BG16} A. Buryak, J. Guere. {\it Towards a description of the double ramification hierarchy for Witten's $r$-spin class}. Journal de Math\'ematiques Pures et Appliqu\'ees~106 (2016), no.~5, 837--865.

\bibitem[BGR19]{BGR19} A. Buryak, J. Gu\'er\'e, P. Rossi. {\it DR/DZ equivalence conjecture and tautological relations}. Geometry~\& Topology~23 (2019), no.~7, 3537--3600.

\bibitem[BPS12a]{BPS12a} A. Buryak, H. Posthuma, S. Shadrin. {\it On deformations of quasi-Miura transformations and the Dubrovin--Zhang bracket}. Journal of Geometry and Physics~62 (2012), no.~7, 1639--1651.
 
\bibitem[BPS12b]{BPS12b} A. Buryak, H. Posthuma, S. Shadrin. {\it A polynomial bracket for the Dubrovin--Zhang hierarchies}. Journal of Differential Geometry~92 (2012), no.~1, 153--185.

\bibitem[BR16a]{BR16} A. Buryak, P. Rossi. {\it Recursion relations for double ramification hierarchies}. Communications in Mathematical Physics~342 (2016), no.~2, 533--568.

\bibitem[BR16b]{BR16b} A. Buryak, P. Rossi. {\it Double ramification cycles and quantum integrable systems}. Letters in Mathematical Physics~106 (2016), no.~3, 289--317.

\bibitem[BSSZ15]{BSSZ15} A. Buryak, S. Shadrin, L. Spitz, D. Zvonkine. {\it Integrals of $\psi$-classes over double ramification cycles}. American Journal of Mathematics~137 (2015), no.~3, 699--737.

\bibitem[CDZ04]{CDZ04} G. Carlet, B. Dubrovin, Y. Zhang. {\it The extended Toda hierarchy}. Moscow Mathematical Journal~4 (2004), no.~2, 313--332.

\bibitem[CKS18]{CKS18} G.~Carlet, R.~Kramer, S.~Shadrin. \emph{Central invariants revisited}. Journal de l'\'Ecole polytechnique. Math\'ematiques~5 (2018), 149--175.

\bibitem[CPS18]{CPS18} G. Carlet, H. Posthuma, S. Shadrin. {\it Deformations of semisimple Poisson pencils of hydrodynamic type are unobstructed}. Journal of Differential Geometry~108 (2018), no.~1, 63--89.

\bibitem[Dic03]{Dic03} L. A. Dickey. {\it Soliton equations and Hamiltonian systems}. Second edition. World Scientific, 2003. 

\bibitem[DR19]{DR19} A. du Crest de Villeneuve, P. Rossi. {\it Quantum $D_4$ Drinfeld--Sokolov hierarchy and quantum singularity theory}. Journal of Geometry and Physics~141 (2019), 29--44.

\bibitem[DLZ06]{DLZ06} B. Dubrovin, S.-Q. Liu, Y. Zhang. {\it On Hamiltonian perturbations of hyperbolic systems of conservation laws I: Quasi-triviality of bihamiltonian perturbations}.
Communications on Pure and Applied Mathematics~59 (2006), no.~4, 559–-615.

\bibitem[DZ98]{DZ98} B. Dubrovin, Y. Zhang. {\it Bihamiltonian hierarchies in $2$D topological field theory at one-loop approximation}. Communications in Mathematical Physics~198 (1998), no.~2, 311--361.

\bibitem[DZ99]{DZ99} B. Dubrovin, Y. Zhang. {\it Frobenius manifolds and Virasoro constraints}. Selecta Mathematica. New Series~5 (1999), no. 4, 423--466.

\bibitem[DZ01]{DZ01} B. Dubrovin, Y. Zhang. {\it Normal forms of hierarchies of integrable PDEs, Frobenius manifolds and Gromov--Witten invariants}. arXiv:math/0108160.

\bibitem[FP00]{FP00} C. Faber, R. Pandharipande. {\it Logarithmic series and Hodge integrals in the tautological ring. With an appendix by Don Zagier}. Dedicated to William Fulton on the occasion of his 60th birthday. Michigan Mathematical Journal~48 (2000), 215--252.

\bibitem[FSZ10]{FSZ10} C. Faber, S. Shadrin, D. Zvonkine. {\it Tautological relations and the $r$-spin Witten conjecture}. Annales Scientifiques de l'\'Ecole Normale Sup\'erieure (4) 43 (2010), no.~4, 621--658.

\bibitem[Hai13]{Hai13} R. Hain. {\it Normal functions and the geometry of moduli spaces of curves}. In {\it Handbook of moduli}, Vol.~I, 527--578. International Press, 2013.

\bibitem[JPPZ17]{JPPZ17} F.~Janda, R. Pandharipande, A. Pixton, D. Zvonkine. {\it Double ramification cycles on the moduli spaces of curves}. Publications Math\'ematiques. Institut de Hautes \'Etudes Scientifiques~125 (2017), 221--266. 

\bibitem[Kon92]{Kon92} M. Kontsevich. {\it Intersection theory on the moduli space of curves and the matrix Airy function}. Communications in Mathematical Physics~147 (1992), no.~1, 1--23.

\bibitem[KM94]{KM94} M. Kontsevich, Yu. Manin. {\it Gromov--Witten classes, quantum cohomology, and enumerative geometry}. Communications in Mathematical Physics~164 (1994), no.~3, 525--562.

\bibitem[KMGZ70]{KMGZ} M.~Kruskal, R.~Miura, C.~Gardner, N.~Zabusky. \emph{Korteweg--de Vries equation and generalizations. V. Uniqueness and nonexistence of polynomial conservation laws}. Journal of Mathematical Physics~11 (1970), no.~3, 952--960.

\bibitem[Liu18]{Liu18} S.-Q.~Liu. {\it Lecture notes on bihamiltonian structures and their central invariants}. In {\it B-model Gromov--Witten theory}, 573--625. Birkh\"auser, Cham, 2018.

\bibitem[LZ05]{LZ05} S.-Q. Liu, Y. Zhang. {\it Deformations of semisimple bihamiltonian structures of hydrodynamic type}. Journal of Geometry and Physics~54 (2005), no.~4, 427–-453.

\bibitem[LZ11]{LZ11} S.-Q.~Liu, Y.~Zhang. \emph{Jacobi structures of evolutionary partial differential equations}. Advances in Mathematics~227 (2011), no.~1, 73--130.

\bibitem[LZ13]{LZ13} S.-Q. Liu, Y. Zhang. {\it Bihamiltonian cohomologies and integrable hierarchies I: A special case}. Communications in Mathematical Physics~324 (2013), no. 3, 897--935.

\bibitem[Man99]{Man99} Y. I. Manin. {\it Frobenius manifolds, quantum cohomology, and moduli spaces}. American Mathematical Society Colloquium Publications, 47. American Mathematical Society, Providence, RI, 1999.

\bibitem[MW13]{MW13} S. Marcus, J. Wise. {\it Stable maps to rational curves and the relative Jacobian}. arXiv:1310.5981.

\bibitem[PPZ15]{PPZ15} R. Pandharipande, A. Pixton, D. Zvonkine. {\it Relations on $\oM_{g,n}$ via $3$-spin structures}. Journal of the American Mathematical Society~28 (2015), no. 1, 279--309.

\bibitem[Ros17]{Ros17} P. Rossi. {\it Integrability, quantization and moduli spaces of curves}. SIGMA. Symmetry, Integrability and Geometry: Methods and Applications~13 (2017), paper no. 060, 29 pp.

\bibitem[Wit91]{Wit91} E. Witten. {\it Two-dimensional gravity and intersection theory on moduli space}. Surveys in Differential Geometry~1 (1991), 243--310.

\end{thebibliography}
\end{document}